\DeclareMathOperator*{\argmax}{arg\,max}
\DeclareMathOperator*{\argmin}{arg\,min}
\newcommand*{\Scale}[2][4]{\scalebox{#1}{$#2$}}%
\newtheorem{theorem}{Theorem}
\newtheorem{lemma}{Lemma}
\newtheorem{proposition}{Proposition}
\newtheorem{remark}{Remark}
\newtheorem{fact}{Fact}
\newcommand{\nosemic}{\renewcommand{\@endalgocfline}{\relax}}
\newcommand{\dosemic}{\renewcommand{\@endalgocfline}{\algocf@endline}}
\let\oldnl\nl
\newcommand{\nonl}{\renewcommand{\nl}{\let\nl\oldnl}}
\newcommand{\removelatexerror}{\let\@latex@error\@gobble}
\begin{document}

	\title{Prevention and Mitigation of Catastrophic Failures in Demand-Supply Interdependent Networks}
		\author{Seyyedali Hosseinalipour,~\IEEEmembership{Student Member,~IEEE,}
        Jiayu Mao,
        Do Young Eun,~\IEEEmembership{Senior Member,~IEEE,}
        and Huaiyu Dai,~\IEEEmembership{Fellow,~IEEE}
\IEEEcompsocitemizethanks{\IEEEcompsocthanksitem S. Hosseinalipour, D. Y. Eun and H. Dai are with the Department
of Electrical and Computer Engineering, North Carolina State University, Raleigh,
NC 27695 (Emails: shossei3,dyeun,hdai@ncsu.edu).\protect\\
\IEEEcompsocthanksitem Jiayu Mao is with the Department of Electrical Engineering, Penn State University, State College, PA 16801 (Email: jbm6279@psu.edu).}}
	\IEEEtitleabstractindextext{%
		\begin{abstract}
We propose a generic system model for a special category of interdependent networks, demand-supply networks, in which the demand and the supply nodes are associated with heterogeneous loads and resources, respectively. Our model sheds a light on a unique cascading failure mechanism induced by resource/load fluctuations, which in turn opens the door to conducting \textit{stress analysis} on interdependent networks. Compared to the existing literature mainly concerned with the node connectivity, we focus on developing effective resource allocation methods to prevent these cascading failures from happening and to mitigate/confine them upon occurrence in the network. To prevent cascading failures, we identify some dangerous stress mechanisms, based on which we quantify the \textit{robustness} of the network in terms of the resource configuration scheme. Afterward, we identify the optimal resource configuration under two resource/load fluctuations scenarios: uniform and proportional fluctuations. We further investigate the optimal resource configuration problem considering heterogeneous resource sharing costs among the nodes. To mitigate/confine ongoing cascading failures, we propose two network adaptations mechanisms: \textit{intentional failure} and \textit{resource re-adjustment}, based on which we propose an algorithm to mitigate an ongoing cascading failure while reinforcing the surviving network with a high robustness to avoid further failures.
		\end{abstract}
		
		\begin{IEEEkeywords}
		Interdependent networks, demand-supply networks, robustness, resource and load fluctuations, cascading failures.
	\end{IEEEkeywords}}

	\maketitle

	\IEEEdisplaynontitleabstractindextext

	\IEEEpeerreviewmaketitle

%
%

	\IEEEraisesectionheading{\section{Introduction}}
\noindent	\IEEEPARstart{I}{nterdependent} networks are a relatively new concept in networking studies, which consist of two or more networks/layers with dependent functionality. Some well-known examples of these networks are i) smart power grid networks, where the functionality of the power-related sensors and the communication infrastructure are coupled \cite{CatastrophicCascadeofFailuresInInterdependentNetworks}, and ii) complex transportation systems consisting of multiple coupled transportation layers, e.g., railways  and  roadways~\cite{TheRobustnessofInterdependentTransportationNetworksUnderTargetedAttack}. The main motivation to study robustness, failure mechanisms, and fault propagation (we refer to all of them as \textit{robustness analysis} for brevity) for these networks is their special failure mechanism called cascading failures~\cite{CatastrophicCascadeofFailuresInInterdependentNetworks}.

Demand-supply networks are a category of interdependent networks, in which the functionality/operability of the nodes in one layer (demand layer) is contingent on receiving supplies from those in the other layer (supply layer). 
Due to the freshness of the topic, there is limited literature on conducting robustness analysis in the context of demand-supply networks. The most related works are~\cite{CascadingFailuresinInterdependentNetworkswithMultipleSupply-DemandLinksandFunctionalityThresholds,arxiveConnectivity}.  In~\cite{CascadingFailuresinInterdependentNetworkswithMultipleSupply-DemandLinksandFunctionalityThresholds}, a model for demand-supply networks is proposed, where the functionality of the demand nodes depends on the number of connected supply nodes. Their model can be considered as an abstract representation of some real-world economic systems such as the network of financial firms and non-financial companies. In the proposed model, the operability of a demand node is examined by solely counting the number of supply nodes connected to it. Hence, there is no notion of resource as a shared quantity among the nodes. In~\cite{arxiveConnectivity}, another model for demand-supply networks is proposed, where having at least a connection to the supply layer guarantees the operability of a demand node. Their model has some similarity to the cyber-physical networks and shared risk groups previously studied in~\cite{Inference,Diverse,approximability}.
In conclusion, in these works  the functionality of a demand node is examined solely by counting the number of supply nodes supporting it, and the notion of \textit{resource} as a quantitative value is overlooked. As a result, these works are mainly concerned with the connectivity among the nodes, while the underlying resource sharing/configuration mechanisms are ignored.

In this work, we conduct robustness analysis on a more generic model, in which the shared resource  among the connected nodes is represented by a quantity that varies from one pair of nodes to another. In our model, demand nodes and supply nodes are associated with heterogeneous \textit{requested loads} and heterogeneous \textit{resource provisioning} capabilities, respectively. In this paradigm, the functionality of a demand node is examined by the total amount of resource received from those connected supply nodes. This leads to the existence of (infinitely) many choices for the configuration of shared resources, i.e., \textit{resource configuration}, among the nodes for a network with a fixed node connectivity, all of which satisfying the requested loads of the demand nodes. This raises the following questions: i) is there any advantage in using a specific resource configuration? and if yes, ii) how important is the study of the resource configuration scheme to conduct robustness analysis? We demonstrate that the resource configuration scheme plays a key role in robustness analysis of our proposed model for demand-supply networks, which goes beyond the connectivity among nodes. One of the main advantages of our model is its adaptability to many real-world scenarios, e.g., allocation of infantry/capitals (considered as resources) to multiple divisions (considered as demand nodes with different requested loads) in battlefields, food/resources and humans in natural disasters, datacenters and users in a cloud network, banks and assets, and power plants and cities. 

  This paper can be broken down into two main parts: i) prevention of cascading failures, and ii) mitigation of ongoing cascading failures, in each of which we focus on studying the resource configuration schemes from a different angle. In the first part, considering quantitative values for the loads and resources, we analyze interdependent networks via \emph{stress analysis}, which cannot be readily carried out using the system model of~\cite{CascadingFailuresinInterdependentNetworkswithMultipleSupply-DemandLinksandFunctionalityThresholds,arxiveConnectivity}. We investigate the effect of \textit{fluctuations} incurred on both the requested loads and the resource provisioning capabilities on the stability of the network. We then propose two \textit{metrics} for the robustness of the network with regard to the resource configuration scheme. In this context, we aim to reinforce the network with a high robustness to prevent cascading failures from happening. In the second part, we target studying the distinct problem of stopping/confining ongoing cascading failures by solely re-adjusting the utilized resource configuration during the propagation of the failures. In particular, we extend the recently introduced concept of \textit{network adaptability} in interdependent networks, e.g.,~\cite{RecoveryOfInterdependentNetworks,StrategyForStoppingFailureCascadesInInterdependentNetworks}, to our model for demand-supply networks and propose effective mechanisms for adaptation of the resource configuration to stop cascading failures. For better comprehension, more discussion on the related works is postponed to Section~\ref{relatedwork}.
  
Our contributions can be summarized as follows: 
\vspace{-2mm}
\begin{enumerate}
\item We provide a generic system model for demand-supply networks capturing heterogeneous resource provisioning capabilities of the supply nodes and heterogeneous requested loads of the demand nodes.  In this paradigm, we reveal the effect of the resource sharing/configuration protocol on the reliability of the network and identify some dangerous stress mechanisms, which can lead to catastrophic failures.

\item As compared to the conventional cascading failures, we study an extended cascading failure process triggered by resource/load fluctuations, taking into account the overload of the supply nodes and resource deficiency of the demand nodes. 

\item  We quantify the robustness of the network with regard to its tolerability against fluctuations in both resource provisioning capabilities and requested loads under two resource/load fluctuation scenarios: i)~uniform fluctuations and ii) proportional fluctuations.  Afterward, we draw a connection between the prevention of cascading failures and maximization of the introduced robustness metrics. 

\item We determine the optimal resource configuration protocol to maximize the robustness for each of the two aforementioned scenarios of resource/load fluctuations. We also address the cost-effective design achieving the highest robustness considering heterogeneous geographical constraints/limitations on resource sharing among the nodes reflected through heterogeneous resource sharing costs. Moreover, assuming a budget constraint, we propose an algorithm to decrease the cost of resource provisioning while maintaining a high robustness for the network. 

\item We introduce a new perspective to the concept of \textit{network adaptability} with respect to the underlying resource configuration protocol in demand-supply networks. In this regard, we propose two adaptability mechanisms: \textit{intentional failure} and \textit{resource re-adjustment}, based on which we develop a novel algorithm to confine/mitigate ongoing cascading failures and provide high robustness for the surviving network so as to avoid triggering of further failures.    
\end{enumerate}
   	\begin{figure*}[t]
		\includegraphics[width=6.0in,height=1.40in]{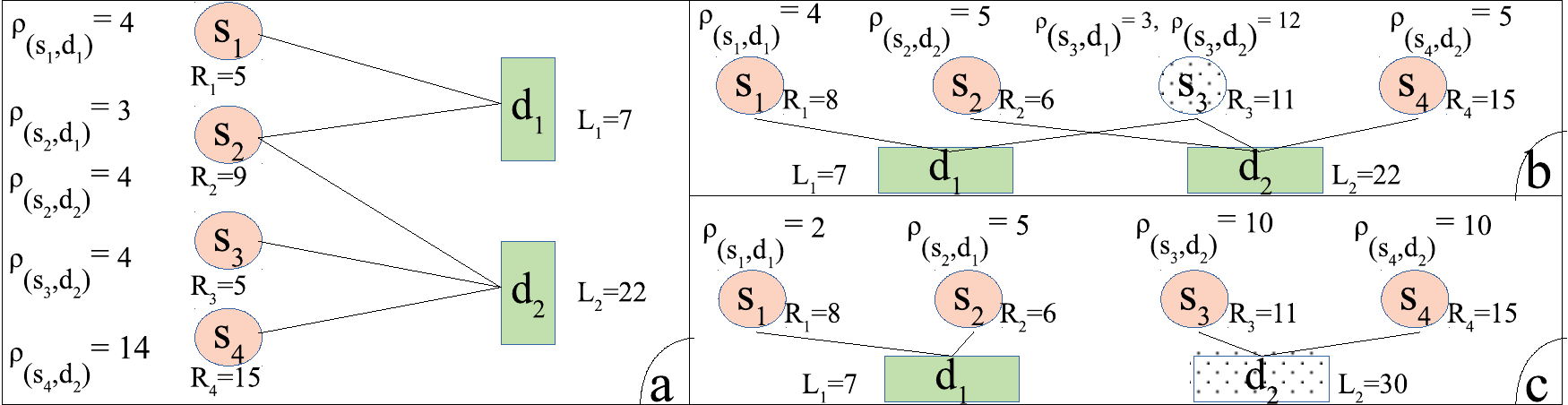}
		\centering
		\caption{a: An example of a stable network. b: An example of a network with an overloaded supply node, c: An example of a network with resource deficiency in a demand node.}
		\label{diag:show1}
	\end{figure*}
    	\begin{figure*}[!b]
		\includegraphics[width=6.0in,height=1.45in]{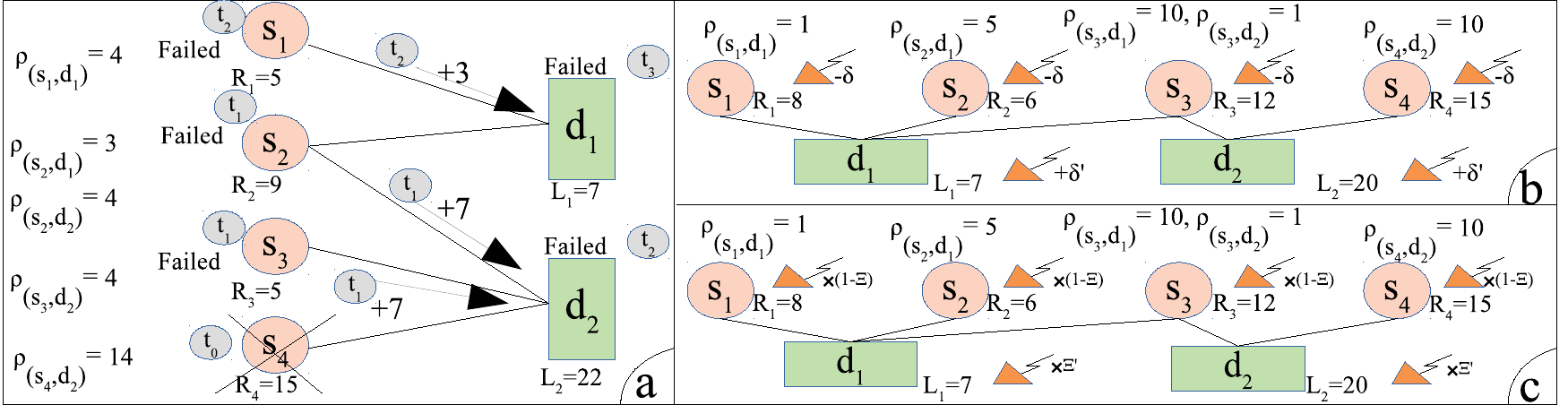}
		\centering
		\caption{a: An example of cascading failures initiated from internal failure of supply node $s_4$ (the numbers in the gray circles indicate time indices). b: Uniform resource/load fluctuations. c: Proportional resource/load fluctuations.}
		\label{diag:show2}
	\end{figure*}
	
	\vspace{-3mm}
\section{Preliminaries}\label{sec:Sysmodel}
\subsection{System model}\label{sec:Sysmodelsys}
We propose a model in which a demand-supply network comprises two layers: supply and demand layer. In the supply layer, there is a set of supply nodes denoted by $\mathcal{S}=\{s_1,\cdots,s_S\}$, $|\mathcal{S}|=S$, where each supply node $s_k\in \mathcal{S}$ is associated with some \textit{resource} units $R_k\in\mathbb{R}^+$. In the demand layer, a set of demand nodes $\mathcal{D}=\{d_1,\cdots,d_D\}$, $|\mathcal{D}|=D$, is considered, where each demand node $d_k \in \mathcal{D}$ is associated with a requested \textit{load} $L_k\in\mathbb{R}^+$. In our model, a supply node can distribute its resources among multiple demand nodes, each of which is capable of receiving resources from multiple supply nodes.\footnote{The geographical restrictions affecting the resource provisioning protocol are investigated in Section~\ref{sec:cost-based}.} For demand node $d_i\in\mathcal{D}$, let $\rho_{(s_k,d_i)}$ denote the amount of resources received from supply node $s_k \in \mathcal{S}$. We represent a demand-supply network as a weighted bipartite graph in which each supply node $s_k\in\mathcal{S}$ is connected to a demand node $d_i\in\mathcal{D}$ if $\rho_{(s_k,d_i)}>0$.\footnote{We will define the weights of the edges with respect to different mechanisms of resource/load fluctuations in the following section.} 

We study the network behavior under the following \textit{stability} conditions:
\begin{align}
&r_k \triangleq \sum_{d_i\in\mathcal{D}} \rho_{(s_k,d_i)}\leq R_k & \forall s_k \in \mathcal{S},\label{stableSupply}\\
&l_i \triangleq\sum_{s_k\in \mathcal{S}}\rho_{(s_k,d_i)}\geq L_i & \forall d_i \in \mathcal{D},\label{stabledemand}
\end{align}
where $r_k$ denotes the \textit{aggregate offered resources} of supply node $s_k\in\mathcal{S}$, and $l_i$ indicates the \textit{aggregate received resources} of demand node $d_i\in\mathcal{D}$. In our model, violations in~\eqref{stableSupply} are regarded as \textit{overloading} of the supply node(s), whereas violations in~\eqref{stabledemand} are considered as \textit{resource deficiencies} in the demand node(s). Since the system will always suffer from instability if the total available resources becomes less than the total requested loads, to avoid triviality, we assume $\sum_{k=1}^{S} R_k > \sum_{k=1}^{D} L_k$ throughout the paper. For a small network consisting of $6$ nodes, three examples of resource sharing/configuration among the nodes are depicted in Fig.~\ref{diag:show1}. It can be verified that the stability conditions hold only for the left instance (Fig.~\ref{diag:show1} (a)).
	
We conduct \textit{stress analysis} on demand-supply networks by classifying the potential stress mechanisms into the following categories: i) internal failure of demand nodes; ii) internal failure of supply nodes; iii) load fluctuations (increase/decrease) at the demand nodes; and iv) resource fluctuations (increase/decrease) at the supply nodes. In the following, harmless stresses are discussed first followed by the dangerous stresses. Considering the stability conditions (\eqref{stableSupply},~\eqref{stabledemand}), the first stress mechanism is harmless since it neither overloads the resource nodes (\eqref{stableSupply} remains intact) nor affects the receiving loads of the other demand nodes (\eqref{stabledemand} keeps inviolate). Also, a reduction in the load of a demand node is harmless since it has no effect on the stability of the network. A similar reason makes increment in the resources of supply nodes harmless. 

In contrast, internal failure of a supply node $s_k\in \mathcal{S}$ is potentially a dangerous stress causing $\rho_{(s_k,d_i)}=0, \forall d_i \in \mathcal{D}$, which may result in resource deficiencies in demand nodes (violation of~\eqref{stabledemand}). Another dangerous stress mechanism is a reduction in resource of a supply node, which (potentially) promotes overloading of the supply node (violation of~\eqref{stableSupply}) and, upon reduction in the amount of offered resources to stabilize the node, resource deficiencies in those connected demand nodes (violation of~\eqref{stabledemand}). Similarly, an increase in the load of a demand node is dangerous since it may (potentially) lead to resource deficiency in the node (violation of~\eqref{stabledemand}), and subsequently overloading of those connected supply nodes (violation of~\eqref{stableSupply}). These stress mechanisms promote a unique cascading failure process, which spreads differently as compared to the conventional cascading failures \cite{CatastrophicCascadeofFailuresInInterdependentNetworks},  discussed in the following subsection. 
\subsection{Cascading failures triggered by dangerous stress mechanisms}\label{sec:SysmodelsysCas}
An example of initialization and evolution of a cascading failure caused by instability of the nodes is depicted in the left diagram of Fig.~\ref{diag:show2} (Fig.~\ref{diag:show2} (a)), where the time indices written in gray circles indicate the order of actions ($t_0<t_1<t_2<t_3$). The cascading failure is initially (at $t_0$) triggered with internal failure of the supply node $s_4$ resulting in resource deficiency in the demand node $d_2$. To trace the evolution of the cascading failure, we check the stability conditions (\eqref{stableSupply},~\eqref{stabledemand}) at each time instance. To compensate for the resource deficiency, demand node $d_2$ drains $14$ extra unit of resources from its adjacent/connected supply nodes. Assume that $d_2$ drains $7$ extra unit of resources from both $s_2$ and $s_3$ (more elaborations on this subject is given in Section~\ref{sec:fluctuations}). At $t_1$, since $s_2$ and $s_3$ can only provide $2$ and $1$ extra unit of resources while being stable, they both fail resulting in resource deficiency in $d_1$ (violation of~\eqref{stabledemand}). At $t_2$, $d_1$ drains $3$ extra unit of resources from $s_1$ leading to the failure of $s_1$ due to overloading (violation of~\eqref{stableSupply}). At the same time, $d_2$ fails due to failure of all of its connected supply nodes (violation of~\eqref{stabledemand}). At the next time instant ($t_3$), $d_1$ fails due to the same reason. In this example, the cascading failures led to failure of all the nodes; however, the influence could be different for another resource configuration scheme. 

We consider two resource/load fluctuation mechanisms triggering cascading failures: uniform and proportional fluctuations, which are shown in diagram (b) and (c) of Fig.~\ref{diag:show2}, respectively. In the uniform fluctuation scenario (diagram (b)), a portion of supply (demand) nodes experience the same amount of reduction (increment), denoted by $\delta$ ($\delta'$) in the diagram, in their resources (loads). In this scenario, the fluctuation in the load of a demand node is distributed evenly among the connected supply nodes. This happens specially when there is an agreement among the supply nodes to evenly compensate for resource deficiencies on the demand layer. Two example of this scenario are i) equal compensation for infantry/capitals among divisions in the battlefield context using multiple backup supply sources, and ii) compensation for the overload of the running cloud servers by providing equal extra processing power from backup resources. It can be verified that for a $\delta$ slightly greater than $1$ both $s_2$ and $s_3$ become unstable and fail. Also, assuming $\delta'$ to be $2$ for $d_2$ puts $s_3$ at the threshold of failure (see Section~\ref{hom}). In the proportional fluctuation scenario (diagram (c)), the amount of fluctuations of the resources/loads is proportional to their initial values, where nodes with larger resources (loads) experience larger reduction (increment), denoted by $\Xi$ ($\Xi'$) in the diagram. In this scenario, the fluctuation in the load of a demand node is distributed among the connected supply nodes proportional to their initial offered resources. Two examples of this scenario are i) banks-assets network, in which the fluctuation in values of the assets affects the banks with respect to their amount of investment, and ii) power plants and cities, in which the fluctuation in the load of the cities distributes among multiple power plants according to their supplying powers. It can be seen that considering $\Xi$ slightly larger than $1/12$ leads to failure of $s_3$. Also, $s_3$ fails when $\Xi'$ is slightly larger than $12/11$ for both $d_2$ and $d_1$ (see Section~\ref{het}). Note that, considering the mentioned dangerous stress mechanism, internal failures of supply nodes can be regarded as load fluctuations in the demand nodes, and thus omitted from our discussions. 

	  \subsection{Related work}\label{relatedwork}
 \subsubsection{Connectivity-based cascading failures}\label{subsub1}  \textit{Robustness} of interdependent networks is studied extensively in literature in the context of a special failure mechanism called \textit{cascading failures}~\cite{CatastrophicCascadeofFailuresInInterdependentNetworks}, most of which aim at designing robust interdependent networks against node/edge removals and studying the robustness with respect to the network characteristics, e.g.,~\cite{NetworkRobustnessofMultiplexNetworkswithInterlayerDegreeCorrelations,DesigningOptimalInterlinkPatternstoMaximizeRobustnessofInterdependentNetwork,ImprovingRobustnessofInterdependentNetworksbyaNewCouplingStrategy,CyberPhysical,markovChainCascades,huang2011robustness}. There is a parallel trend of research in D2D networks pursuing the same goal, e.g., \cite{ResilienceD2D1,ResilienceD2D2}. In these works, the functionality of a node is merely examined based on its connectivity to the giant component in its layer and its dependent nodes on the other layer.  {\color{black}Moreover, there are some similar works in the power-grid literature, e.g.,~ \cite{tootaghaj1,tootaghaj2}, which consider the cascade of failures between the power grid network and the communication network. In this literature, controlling the loads and the power generators in the power grid network is studied to suppress the impact of cascading failure. Nevertheless, the resource sharing/provisioning among the nodes between different layers is not studied in all the aforementioned works and most of contemporary literature.}

\subsubsection{Load-based cascading failures} Cascading failures caused by load (re-)distribution among the nodes are studied in literature (e.g.,~\cite{RobustnessofInterdependentNetworksWithDifferentLinkPatternsAgainstCascadingFailures,ERfailure,impactOfTopology}), in which the load of the nodes are independent/decoupled to each other and load distribution only occurs upon failures of the nodes. However, in our model, there is a constant coupling between resources provided from the supply nodes and the load of the demand nodes. Hence, fluctuations of the resources/loads in one layer can lead to instability of the nodes in the other layer. Moreover, in our model, in the middle of a cascading failure, a demand node with resource deficiency drains extra resources from its adjacent supply nodes, which can result in either stability of the demand node or overloading of the supply nodes. This paradigm distinguishes our cascading failure mechanism and proposed robustness analysis from that literature. 

\subsubsection{Network adaptability} The authors in \cite{NetworkAdaptabilityFromDisasterDisruptionsAndCascadingFailures} provide some general ideas and approaches to prepare the telecommunication networks priori to natural disasters. In the single layer network context, \cite{ImproveNetworksRobustnessAgainstCascadewithRewiring} proposes network rewiring mechanisms aiming to achieve a higher robustness against cascading failures. The concept of network adaptability is recently investigated in the context of interdependent networks in~\cite{RecoveryOfInterdependentNetworks,StrategyForStoppingFailureCascadesInInterdependentNetworks} in terms of rehabilitating a portion of failed nodes in the process of cascading failures. However, the system model and cascading failures mechanism considered in these works belong to the first enumerated category (Section~\ref{subsub1}) rendering them irrelevant to our work. In this paper, we provide a new perspective to the concept of network adaptability by incorporating the resource sharing mechanism into cascading failures. Our focus is on adaptation of resource allocations among the nodes to confine cascading failures. To this end, we propose two adaptation strategies: i) intentional failures of the demand nodes, and ii) resource re-adjustments (re-allocations) among the supply nodes, using which we propose an algorithm to confine cascading failures and provide high robustness for the surviving network.  
    
\subsubsection{Demand-supply networks}   There are a few works studying interdependent networks in the context of demand-supply networks, among which the most related ones are~\cite{CascadingFailuresinInterdependentNetworkswithMultipleSupply-DemandLinksandFunctionalityThresholds,arxiveConnectivity}. In these works, there is no notion of ``resource'' as a quantitative value and the demands of the demand nodes are merely identified by the required number of connected supply nodes. Consequently, studying the connectivity of the network is the main focus of these studies, e.g., extending the concept of the edge-cut problem~(\cite{ApproximationAndHardnessResultsForLabelCutandRelatedProblems,ComplexityAndApproximabilityIssuesOfSharedRiskResourceGroup}) to the node-cut problem in~\cite{arxiveConnectivity}. Also, the model in these works captures one-way dependency between the nodes, where the functionality of the supply nodes is presupposed irregardless of the demand layer situations. In this paper, considering resources/loads as coupled quantitative values imposes a two-way dependency among the nodes, where resources/loads fluctuations in one layer can lead to instability in the other. Also, in our model, two networks with the same connectivity (wiring) configuration may react to failures differently due to different resources/loads of the nodes. Due to these inherent differences, the cascading failure process of this work along with its proposed design/analysis is explicitly different from those works.   

We would like to mention notable works of~\cite{CascadingFailuresInbi-partiteGraphsModelForSystemicRiskPropagation,StabilityAnalysisofFinancialContagionDuetoOverlappingPortfolios} in economics literature, which propose bipartite interdependent network modeling of bank-asset networks. We contribute to this literature with our design framework achieving a high robustness against fluctuations and our network adaptation methods to confine cascading failures, which are complementary and can be integrated into their models.

{\color{black} Also, there is a body of works on modeling the load and capacity of the transmission lines in the power grid literature, the most relevant of which aim to model the cascading failures caused by load fluctuation on the transmission lines and propose effective network designs, e.g.,
\cite{UniformLines1,UniformLines2,UniformLines3}. In this literature, the network consists of multiple transmission lines; upon failure, the load of the failed lines will be redistributed among all the remaining ones. Hence, there is no notion of node (supply node/demand node) as an entity in the network. Consequently, the network is not considered as a two-layer interdependent network with resource provisioning among the nodes, making their model fundamentally different as compared to our bipartite network model. This fact makes our system model and cascading failure mechanism completely different from that literature.}

\section{Quantifying and Maximizing the Robustness Considering Load/Resource Fluctuations\label{sec:fluctuations}}
\subsection{Uniform resource/load fluctuations}\label{hom}

For the uniform resource fluctuations, we pursue the worst-case design approach assuming $R_k\rightarrow R_k-\delta$, $\forall s_k \in \mathcal{S}$, $\delta\in \mathbb{R}^+$ (see Fig.~\ref{diag:show2} (b)). To study the stability of the supply nodes, we define the \textit{free capacity} $C_k$ of the supply node $s_k\in\mathcal{S}$ as:
\begin{equation}\label{eq:freecapac}
C_k=R_k-r_k.
\end{equation} 
We measure the robustness of the network in terms of the maximum tolerability of the network against resource fluctuations $MTRF$ defined as:
\begin{equation}
\begin{aligned}
MTRF&=\displaystyle \min_{ \delta\in \mathbb{R}^+} \{\delta: C_k-\delta \leq 0, \;\;\exists s_k\in \widehat{\mathcal{S}} \}\\
&\equiv \min  \{C_k:s_k\in \widehat{\mathcal{S}}\} ,
\end{aligned}
\end{equation}
where $\Scale[0.95]{\widehat{\mathcal{S}}\triangleq\{s_k: r_k>0\}}$ denotes the set of supply nodes involved in resource provisioning. In other words, $MTRF$ is equal to the minimum amount of resource fluctuation that results in the instability of at least a supply node engaged in resource provisioning, which is equivalent to the minimum free capacity among the supply nodes offering resources to the demand layer. It can be construed that, upon uniform resource fluctuations, an increase in this parameter is directly linked to prevention of cascading failures. In the described demand-supply bipartite graph, we consider the weight of an edge connecting a pair of demand and supply nodes to be the amount of shared resources between them. 

Our goal is then to find the optimal weighted adjacency matrix $\mathbf{P}^*=[\rho_{(s_k,d_g)}]_{s_k\in \mathcal{S}, d_g \in \mathcal{D}}$ corresponding to a network with the maximum $MTRF$. One approach to achieve the largest $MTRF$ is to minimize the offered resources used at each supply node, which leads to the maximum free capacity for each of them. On the other hand, simultaneously satisfying the loads of the demand nodes and minimizing the offered resources used at each supply node is equivalent to stabilizing the demand nodes by satisfying $l_g = L_g, \;\forall d_g \in \mathcal{D}$ (see~\eqref{stabledemand}). Using these facts, the resource configuration $\mathbf{P}^*=[\rho_{(s_k,d_g)}]_{s_k\in \mathcal{S}, d_g \in \mathcal{D}}$ achieving the highest robustness against uniform resource fluctuations can be obtained by solving the following optimization problem:
\begin{flalign} &\mathbf{P}^*=\argmax_{\mathbf{P}=[\rho_{(s_k,d_g)}]_{s_k\in \mathcal{S}, d_g \in \mathcal{D}}}\{\min_{s_i\in \hat{\mathcal{S}}} \{C_i\}\}\label{eq:dummy} \\
(\bm{\mathcal{P}1}) \hspace{12mm}&\textrm{s.t.}\nonumber \\
&r_i \leq R_i,\;\;  \forall s_i \in \mathcal{S}, \label{eq:c1} \\
&l_g = L_g,\;\;\forall d_g \in \mathcal{D},\label{eq:c2}\\
&\rho_{(s_i,d_g)}\geq 0,\; \;\forall d_g \in \mathcal{D},\;  \forall s_i \in \mathcal{S},\label{eq:c3}
\end{flalign}
where~\eqref{eq:c1} implies the stability of the supply nodes,~\eqref{eq:c2} ensures the functionality of the demand nodes, and~\eqref{eq:c3} guarantees a feasible allocation. In the following, we aim to solve this problem. Note that the objective function of $\bm{\mathcal{P}1}$ is not a strict concave function. Hence, there can be multiple resource configuration schemes that are the solutions to $\bm{\mathcal{P}1}$. As further discussed in Remark~\ref{rem1} and Section~\ref{sec:cost-based} below, identifying a unique weighted adjacency matrix requires the knowledge of the cost of resource sharing among the nodes. The following fact serves as the basis to solve $\bm{\mathcal{P}1}$.
\begin{fact}
If the sum of the offered resource to the demand layer is larger than or equal to the sum of the loads of the demand nodes while all the supply nodes are stable, there is at least a resource configuration scheme that leads to the stability of all the nodes.
\end{fact}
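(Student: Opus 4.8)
The plan is to read this statement as a \emph{feasibility} claim for the stability conditions~\eqref{stableSupply}--\eqref{stabledemand}, and to prove it constructively by exhibiting one explicit resource configuration rather than appealing to an existence theorem. The key observation is that the base model imposes no connectivity restrictions: every supply node may serve every demand node (the footnoted geographical constraints enter only later). The underlying bipartite graph is therefore complete, so I would not need the full machinery of max-flow/min-cut or Gale's transportation-feasibility theorem; a single closed-form allocation will certify the claim. I would also first translate the hypothesis into the form I actually use: the assumption that the aggregate offered resource satisfies $\sum_{k} r_k \geq \sum_{i} L_i$ while every supply node is stable ($r_k \leq R_k$) immediately yields $\sum_{k} R_k \geq \sum_{i} L_i$, which (being the standing assumption of the paper) is the only inequality the construction requires.

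Concretely, I would propose the proportional \emph{water-filling} configuration
\[
\rho_{(s_k,d_i)} = \frac{R_k\,L_i}{\sum_{j=1}^{S} R_j}, \qquad \forall\, s_k\in\mathcal{S},\ \forall\, d_i\in\mathcal{D},
\]
and then check the three requirements in turn. Non-negativity~\eqref{eq:c3} is immediate since all $R_j$ and $L_i$ are positive. The demand side is verified directly: $l_i = \sum_{k}\rho_{(s_k,d_i)} = L_i\cdot\big(\sum_{k} R_k\big)/\big(\sum_{j} R_j\big) = L_i$, so~\eqref{stabledemand} holds \emph{with equality}; this simultaneously confirms that each demand node is exactly satisfied and that no resource is wasted. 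Finally, the supply side reads $r_k = \sum_{i}\rho_{(s_k,d_i)} = R_k\cdot\big(\sum_{i} L_i\big)/\big(\sum_{j} R_j\big)$, and the inequality $r_k \leq R_k$ is \emph{exactly} the condition $\sum_{i} L_i \leq \sum_{j} R_j$ obtained from the hypothesis. Hence all of~\eqref{stableSupply},~\eqref{stabledemand}, and~\eqref{eq:c3} are met by a single configuration, establishing the existence of a fully stable resource-sharing scheme.

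The honest assessment is that there is essentially no deep obstacle here: the statement is a feasibility certificate, and completeness of the bipartite graph collapses it to a one-line construction. The only step that genuinely consumes the hypothesis is the supply-stability verification, where the total-resource-exceeds-total-load condition is used; everything else is algebraic bookkeeping. If one wanted a more general version allowing restricted connectivity (as in Section~\ref{sec:cost-based}), the proportional scheme would no longer be admissible and the argument would instead route through a transportation/flow feasibility condition of Gale--Hoffman type; but for the unconstrained model of this section, the explicit proportional allocation is both sufficient and sharp, which is why I would present it directly.
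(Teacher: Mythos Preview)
Your proof is correct. The paper itself does not prove this statement: it is asserted as a ``Fact'' without justification and then immediately used as the basis for solving $\bm{\mathcal{P}1}$. Your explicit proportional allocation $\rho_{(s_k,d_i)} = R_k L_i/\sum_j R_j$ is a clean constructive certificate, and the three verifications (non-negativity, demand equality, supply inequality via $\sum_i L_i \leq \sum_j R_j$) are all sound. It is worth noting that your construction is precisely the configuration the paper later identifies in Theorem~2 (equation~\eqref{eq:dum}) as optimal for proportional fluctuations, so your choice is not only feasible but anticipates a structure the paper exploits downstream; you have, in effect, supplied the missing proof using the paper's own preferred allocation.
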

With the above fact, we first focus on obtaining the optimal resource that supply nodes need to offer to the demand layer. The derivation of the optimal weighted adjacency matrix, i.e., the optimal resource configuration, is deferred to Section~\ref{sec:cost-based}.

\begin{lemma}\label{cor:1}
Assume that the resources of the supply nodes can be sorted as $R_{i_1}> R_{i_2}> \cdots > R_{i_S}$, $i_k \in \{1,\cdots,S \},\; \forall k$, and define the dummy variable $R_{i_{S+1}}=0$. Let $v^*$ denote the smallest index in which $\sum_{v=1}^{v^*} R_{i_v} - v^* R_{i_{v^*+1}} \geq \sum_{g=1}^{D} L_g$. Then, $\sum_{g=1}^{D}\rho^*_{(s_k,d_g)}>0 \; \forall k \in \{i_1,\cdots, i_{v^*}\}$, and $ \sum_{g=1}^{D} \rho^*_{(s_k,d_g)}=0 \; \forall k \in \{i_{v^*+1},\cdots, i_{S}\}$.
\end{lemma}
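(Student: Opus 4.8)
The plan is to reduce $\bm{\mathcal{P}1}$ to a one-dimensional water-filling problem over the aggregate offered resources, solve it explicitly, and then read off the support of the optimum. First I would observe that the objective $\min_{s_i\in\widehat{\mathcal{S}}}\{C_i\}=\min_{k:\,r_k>0}\{R_k-r_k\}$ and all constraints depend on $\mathbf{P}$ only through the row sums $r_k=\sum_{g}\rho_{(s_k,d_g)}$: constraint~\eqref{eq:c1} reads $r_k\le R_k$, constraint~\eqref{eq:c3} forces $r_k\ge 0$, and summing~\eqref{eq:c2} over $g$ gives $\sum_k r_k=\sum_g L_g=:T$. Conversely, by the Fact stated above, any vector $(r_k)$ with $0\le r_k\le R_k$ and $\sum_k r_k=T$ is realizable by a feasible $\mathbf{P}$ satisfying $l_g=L_g$ for all $g$ (a balanced transportation instance on the complete bipartite graph). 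Hence it suffices to maximize $\min_{k:\,r_k>0}\{R_k-r_k\}$ subject to $0\le r_k\le R_k$ and $\sum_k r_k=T$, and then to identify the support $\{k:\,r_k^*>0\}$ of the optimizer, which is exactly what the lemma asserts.

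Next I would characterize the optimum by water-filling on the free capacities. Define $f(c)=\sum_{k:\,R_k>c}(R_k-c)$, which is continuous and strictly decreasing from $f(0)=\sum_k R_k>T$ (using the standing assumption $\sum_k R_k>\sum_g L_g$) down to $0$; let $c^*$ be the unique level with $f(c^*)=T$. For any feasible allocation whose minimum active free capacity is $c$, every active node satisfies $R_k>c$ and $r_k\le R_k-c$, so $T=\sum_{k:\,r_k>0} r_k\le f(c)$ and therefore $c\le c^*$. The level $c^*$ is attained by setting $r_k^*=R_k-c^*$ on $\{k:\,R_k>c^*\}$ and $r_k^*=0$ otherwise, which yields $\sum_k r_k^*=f(c^*)=T$ and equalized free capacities $c^*$ on the active set. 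Moreover the support must be exactly $\{k:\,R_k>c^*\}$, since dropping any such node would lower the attainable total strictly below $T$; hence the optimal aggregate offered resources, and with them the active set, are unique.

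Finally I would match $\{k:\,R_k>c^*\}$ to the top $v^*$ nodes. With the $R$'s sorted as in the statement, if exactly the top $m$ nodes are active then $c^*=\big(\sum_{v=1}^m R_{i_v}-T\big)/m$, and the defining inequalities $R_{i_{m+1}}\le c^*<R_{i_m}$ rewrite as $\sum_{v=1}^{m} R_{i_v}-m R_{i_{m+1}}\ge T$ together with $\sum_{v=1}^{m-1} R_{i_v}-(m-1) R_{i_m}< T$. The first is precisely the inequality defining $v^*$ evaluated at index $m$, while the second is its failure at index $m-1$; since $g(m):=\sum_{v=1}^{m} R_{i_v}-m R_{i_{m+1}}=\sum_{v=1}^{m}(R_{i_v}-R_{i_{m+1}})$ is strictly increasing in $m$ (its increment equals $(m+1)(R_{i_{m+1}}-R_{i_{m+2}})>0$ by the strict ordering of the $R$'s), the unique $m$ meeting both conditions is the smallest index with $g(m)\ge T$, i.e. $m=v^*$. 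This gives the active set $\{i_1,\dots,i_{v^*}\}$ and hence the claim.

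I expect the main obstacle to be the max--min characterization in the second step, precisely because the minimum is taken only over the active set: a supply node with small $R_k$ cannot participate at a high target free capacity, which is what rules out spreading the load over all nodes and produces the threshold structure. Pinning down that the optimal support is exactly $\{k:\,R_k>c^*\}$, and that it is unique, is the crux; once $c^*$ is located, the identification of $v^*$ is routine algebra. I would also flag that the reduction in the first step relies on the Fact above for the realizability direction.
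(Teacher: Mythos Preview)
Your argument is correct. The paper defers the proof of Lemma~1 to its supplementary materials, but the Remark following the lemma and the proof of Theorem~1 make the paper's overall strategy clear: first determine the active set $\widehat{\mathcal{S}}$ (Lemma~1, argued from the intuition that nodes with higher resources should be used), then show that optimality forces all active free capacities to coincide (Theorem~1). Your water-filling approach reverses this order: you first locate the common free-capacity level $c^*$ via the monotone function $f(c)=\sum_{k:R_k>c}(R_k-c)$, and then read off the active set as $\{k:R_k>c^*\}$ and translate the threshold into the index condition defining $v^*$. The content is the same, but your route delivers Lemma~1 and Theorem~1 in a single pass and makes uniqueness of both the active set and the offered resources $r_k^*$ transparent from the equality chain $T=\sum_{\text{active}}r_k\le \sum_{\text{active}}(R_k-c^*)\le f(c^*)=T$. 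One minor remark: the realizability direction in your first step needs slightly more than the stated Fact (which only guarantees \emph{some} stable configuration); what you actually invoke, and correctly name, is feasibility of a balanced transportation problem with prescribed marginals on the complete bipartite graph, which is exactly how the paper uses that Fact when it passes to aggregate offered resources in Theorem~1.
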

\begin{proof}
Please refer to the supplementary materials.
\end{proof}
\begin{remark}
The above lemma determines the resource nodes involved in the resource provisioning: $\widehat{\mathcal{S}}=\{s_{i_1},s_{i_2},\cdots,s_{i_{v^*}} \}$. The main intuition behind the lemma is the fact that the $MTRF$ is defined as the minimum free capacity of the resource nodes involved in resource provisioning. Hence, to maximize this metric those nodes with higher resources should be used. More explanation on this fact along with the examination of the case in which resource nodes may have equal amout of resources are given in the supplementary materials.
\end{remark}

\begin{theorem}
For the sorted resources and $v^*$ described in Lemma~\ref{cor:1}, the optimal resource offered from supply nodes is given by:
\begin{equation}\label{eq:firstAns4}
\begin{cases}
\sum_{g=1}^{D} \rho^*_{(s_k,d_g)}= R_k -\frac{\sum_{j=1}^{v^*} R_{i_j} - \sum_{g=1}^{D} L_g}{v^*}, \\ 
\;\forall k \in \{i_1,\cdots, i_{v^*}\}, \\ 
\rho^*_{(s_k,d_g)}=0 \;\forall k \in \{i_{v^*+1},\cdots, i_{S}\},\; \forall d_g \in \mathcal{D}.
\end{cases}
\end{equation}
\end{theorem}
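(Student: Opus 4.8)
The plan is to reduce $\bm{\mathcal{P}1}$---restricted to the active set already identified in Lemma~\ref{cor:1}---to an elementary max--min allocation problem over the free capacities, and then close it with an averaging (equalization) argument. First I would extract the main consequence of the equality constraint~\eqref{eq:c2}: summing $l_g=L_g$ over all $d_g$ and swapping the order of summation gives
\begin{equation}
\sum_{k=1}^{S} r_k=\sum_{k=1}^{S}\sum_{g=1}^{D}\rho_{(s_k,d_g)}=\sum_{g=1}^{D} l_g=\sum_{g=1}^{D} L_g,
\end{equation}
so the total offered resource is \emph{fixed} across every feasible configuration. By Lemma~\ref{cor:1} the optimal active set is $\widehat{\mathcal{S}}=\{s_{i_1},\dots,s_{i_{v^*}}\}$ while the remaining nodes offer nothing, hence $\sum_{j=1}^{v^*} r_{i_j}=\sum_{g=1}^{D} L_g$.

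Next I would recast the objective in terms of the free capacities $C_{i_j}=R_{i_j}-r_{i_j}$, whose sum is therefore the constant
\begin{equation}
\sum_{j=1}^{v^*} C_{i_j}=\sum_{j=1}^{v^*} R_{i_j}-\sum_{g=1}^{D} L_g,
\end{equation}
independent of the chosen allocation. Since the minimum of finitely many numbers never exceeds their average, every feasible configuration obeys $\min_{j} C_{i_j}\le \frac{1}{v^*}\sum_{j} C_{i_j}=t^*$, where I set $t^*\triangleq\big(\sum_{j=1}^{v^*} R_{i_j}-\sum_{g=1}^{D} L_g\big)/v^*$. This is the sought upper bound on the objective of $\bm{\mathcal{P}1}$, and it is met with equality exactly when all free capacities are equalized to $t^*$, i.e.\ $r^*_{i_j}=R_{i_j}-t^*$, which is precisely~\eqref{eq:firstAns4}.

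The remaining and genuinely delicate step is to verify that this equalized configuration is \emph{feasible} and, crucially, that it keeps every node of $\widehat{\mathcal{S}}$ active so that the realized active set indeed coincides with the one in Lemma~\ref{cor:1}; this reduces to checking $0\le t^*<R_{i_{v^*}}$. The lower bound $t^*\ge R_{i_{v^*+1}}\ge 0$ is immediate from the defining inequality $\sum_{v=1}^{v^*}R_{i_v}-v^* R_{i_{v^*+1}}\ge\sum_{g}L_g$ of $v^*$. The upper bound is where the \emph{minimality} of $v^*$ is used: the index $v^*-1$ fails the threshold, i.e.\ $\sum_{v=1}^{v^*-1}R_{i_v}-(v^*-1)R_{i_{v^*}}<\sum_{g}L_g$; rewriting the left side as $\sum_{v=1}^{v^*}R_{i_v}-v^* R_{i_{v^*}}$ and rearranging yields $t^*<R_{i_{v^*}}$ directly (the case $v^*=1$ being handled by the empty-sum convention, for which $t^*=R_{i_1}-\sum_g L_g<R_{i_1}$ trivially). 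Because $R_{i_j}\ge R_{i_{v^*}}>t^*$ for every $j\le v^*$, we obtain $r^*_{i_j}=R_{i_j}-t^*>0$ (all nodes active) and $r^*_{i_j}\le R_{i_j}$ (since $t^*\ge 0$), so~\eqref{eq:c1} holds; and since the total offered resource equals $\sum_g L_g$ with all supply nodes stable, Fact~1 supplies an explicit nonnegative allocation $\rho^*_{(s_k,d_g)}$ realizing these marginals and meeting~\eqref{eq:c2}. I expect this feasibility-and-minimality bookkeeping, rather than the averaging bound, to be the main obstacle, as it is the only place where the precise definition of $v^*$ is genuinely exploited.
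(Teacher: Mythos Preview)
Your proposal is correct and follows essentially the same route as the paper: both restrict to the active set from Lemma~\ref{cor:1}, observe that $\sum_{j}C_{i_j}$ is fixed by~\eqref{eq:c2}, and conclude via an equalization argument that the max--min is attained when all free capacities coincide (the paper phrases this through auxiliary variables $G_k=C_{j_k}-C_{j_1}$ rather than the min\,$\le$\,average inequality, but the content is identical). Your additional feasibility check that $0\le t^*<R_{i_{v^*}}$, drawn from the minimality of $v^*$, is a useful refinement that the paper's proof leaves implicit.
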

\vspace{1.5mm}
\begin{proof}
Assume that the set of resource nodes involved in the resource provisioning, according to Lemma~\ref{cor:1}, are given by: $\widehat{\mathcal{S}}=\{s_{i_1},s_{i_2},\cdots,s_{i_{v^*}} \}$. Upon solving $\bm{\mathcal{P}1}$, assume that the free capacities of these nodes after resource provisioning  are sorted as $C_{j_1}\leq C_{j_2}\leq \cdots \leq C_{j_{\nu^*}}$. Define the dummy variables $G_k=C_{j_k}-C_{j_1}$, $\forall k$. Since $l_g=L_g$, $\forall d_g\in \mathcal{D}$, we get $\sum_{k=1}^{\nu^*} C_{j_k}=\sum_{k=1}^{\nu^*} R_k-\sum_{k=1}^{D} L_k$. On the other hand, $\sum_{k=1}^{\nu^*} G_k = \sum_{k=1}^{\nu^*} C_{j_k}- \nu^* C_{j_1}$. Using these facts, the objective function of $\bm{\mathcal{P}1}$ can be written as:
\begin{equation}
\begin{aligned}
    &\hspace{-3mm}\argmax_{\mathbf{P}=[\rho_{(s_k,d_g)}]_{s_k\in \mathcal{S}, d_g \in \mathcal{D}}}\{\min_{s_i\in \hat{\mathcal{S}}} \{C_i\}\}\\
   &\hspace{-3mm} \equiv\hspace{-3mm} \argmax_{\mathbf{P}=[\rho_{(s_k,d_g)}]_{s_k\in \mathcal{S}, d_g \in \mathcal{D}}} C_{j_1} \\
    &\hspace{-3mm}\equiv\hspace{-3mm}
    \argmax_{\mathbf{P}=[\rho_{(s_k,d_g)}]_{s_k\in \mathcal{S}, d_g \in \mathcal{D}}} \left(\sum_{k=1}^{\nu^*} C_{j_k} -\sum_{k=1}^{\nu^*} G_k\right)/\nu^*\\
    &\hspace{-3mm}\equiv\hspace{-3mm}
    \argmax_{\mathbf{P}=[\rho_{(s_k,d_g)}]_{s_k\in \mathcal{S}, d_g \in \mathcal{D}}} \left(\sum_{k=1}^{\nu^*} R_k-\sum_{k=1}^{D} L_k -\sum_{k=1}^{\nu^*} G_k\right)/\nu^*.
    \end{aligned}
    \hspace{-6mm}
\end{equation}
From the last expression, it can be verified that the optimal solution is attained when $\sum_{k=1}^{\nu^*} G_k=0$. Since $G_k\geq 0$, this is achieved when $G_k=0$, $\forall k$. This is equivalent to $C_{j_k}=C_{j_1}$, $\forall k$, implying the same free capacity for those supply nodes involved in resource provisioning. Considering this result and~\eqref{eq:c2}, we get: $\sum_{s_i\in \hat{\mathcal{S}}} C_i= \sum_{s_i\in \hat{\mathcal{S}}} R_i-\sum_{g=1}^{D} L_g$, which implies $\nu^* C_i= \sum_{s_k\in \hat{\mathcal{S}}} R_k-\sum_{g=1}^{D} L_g$, $\forall s_i\in \hat{\mathcal{S}}$. Considering~\eqref{eq:freecapac}, we get: $r_i= R_i-\left(\sum_{s_k\in \hat{\mathcal{S}}} R_k-\sum_{g=1}^{D} L_g\right)/\nu^*$, $\forall s_i\in \hat{\mathcal{S}}$, and the theorem is proved.
\end{proof}

So far, the resource fluctuations was the main concern of the design. Nevertheless, achieving a high robustness against uniform load fluctuations is a straightforward extension. We define the maximum tolerable load fluctuations $MTLF$ as:
\begin{flalign}
MTLF=\displaystyle{\min_{\delta'}}\{{\delta'}:C_k - \frac{\delta'}{|\mathcal{N}(d_g)|}\leq 0,\nonumber\\
\exists d_g \in \mathcal{D}, \exists s_k\in \mathcal{N}(d_g)\},
\end{flalign}
where $\mathcal{N}(d_g)$ denotes the set of adjacent supply nodes of $d_g$, $\mathcal{N}(d_g)\triangleq\{s_k: s_k\in \mathcal{S}, \rho_{(s_k,d_g)}>0 \}$. In other words, $MTLF$ is the minimum value of increment in the load of demand nodes, which results in failure of (at least) one supply node under uniform spreading of the resource deficiencies. Using~\eqref{eq:firstAns4}, it can be construed that the free capacities of all the supply nodes with non-zero offering resources to the demand layer is the same $C_i=\left(\sum_{j=1}^{v^*} R_{i_j} - \sum_{g=1}^{D} L_g\right)/v^*$, $\forall s_i \in \widehat{\mathcal{S}}$. In this case, maximizing the value of $|\mathcal{N}(d_g)|$, i.e., connecting each demand node to all the supply nodes, leads to the maximization of $MTLF$. For this purpose, after solving $\mathcal{P}1$, the designer needs to consider a small value $\epsilon$ and allocate $\epsilon$ unit of resources from each supply node involved in the resource provisioning, i.e., $\forall s_k\in \widehat{\mathcal{S}}$, to all the demand nodes: $\rho^*_{(s_k,d_g)}> \epsilon, \forall d_g \in \mathcal{D}, \forall s_k\in \widehat{\mathcal{S}}$. 

   \begin{remark}\label{rem1}
Note that~\eqref{eq:firstAns4} determines the total offered resources from each supply node without specifying the exact values of resources shared among different pairs of nodes. This is due to focusing on the free capacities of the nodes while assuming no difference in terms of resource sharing through different links. Hence, among the optimal solutions, any resource distribution strategy satisfying $\sum_{i=1}^{S} \rho^*_{(s_i,d_g)} = L_g,\; \forall d_g \in \mathcal{D}$, can be deployed. Upon having heterogeneous costs associated with different links, the precise values of shared resources among the nodes can be derived (see Section~\ref{sec:cost-based}). The same philosophy holds for the discussions in Section~\ref{het}.
\end{remark}
\subsection{Proportional resource/load fluctuations}\label{het}
To capture the proportional distribution of load fluctuations among the supply nodes, we model the offered resources of supply node $s_i\in \mathcal{S}$ to demand node $d_g \in \mathcal{D}$ as $\rho_{(s_i,d_g)}=w_{ig}L_g$, where $w_{ig} \in [0,1]$ denotes the weight of the edge between them determining the fraction of the load of $d_g$ offered via $s_i$. In this case, the condition $\sum_{i=1}^{S} w_{ig} = 1$ ensures the functionality of demand node $d_g \in \mathcal{D}$. We design the system for the worst-case scenario, where the load of each demand node increases from $L_g$ to $\Xi' L_g$, $\forall d_g \in \mathcal{D}$,  $\Xi'>1$. In this case, to maintain the stability of the demand node $d_g \in \mathcal{D}$, supply node $s_i \in \mathcal{S}$ should provide $w_{ig} (\Xi'-1) L_g$ extra resources to the demand node. Hence, the requested resource that each supply $s_i \in \mathcal{S}$ should provide changes as $r_i \rightarrow \Xi' r_i$. Also, in resource fluctuations scenario, the resources of each supply node decreases from $R_i$ to $(1-\Xi) R_i$, $\forall s_i \in \mathcal{S}$, where $0<\Xi<1$. For this scenario, we define the $MTLF$ and the $MTRF$ as follows: 
\begin{align}
&MTLF=\displaystyle \max_{\Xi'>1} \{\Xi': \Xi' r_i\leq   R_i,\;\; \forall s_i \in \widehat{\mathcal{S}} \}, \\
&MTRF=\displaystyle \max_{0<\Xi<1} \{\Xi: (1-\Xi)R_i\geq r_i, \;\;\forall s_i \in \widehat{\mathcal{S}}  \}.
\end{align}
In other words, in this case, $MTLF$ and $MTRF$ are the maximum amount of proportional load increment and the maximum amount of proportional resource reduction, for which the stability of all the supply nodes is guaranteed. In the following lemma and theorem, we identify the maximum attainable value of these parameters and propose a simple resource configuration scheme, which simultaneously achieves the maximum attainable values of both. 
\vspace{1.5mm}
\begin{lemma}
Given a demand-supply network, for any resource configuration scheme, $MTLF$ and $MTRF$ are bounded as follows: $MTLF\leq \frac{\sum_{s=1}^{S} R_s }{\sum_{g=1}^{D}L_g}, MTRF\leq 1- \frac{\sum_{g=1}^{D}L_g}{\sum_{s=1}^{S} R_s}$.
\end{lemma}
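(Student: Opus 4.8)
The plan is to convert the two optimization-style definitions into explicit closed forms and then aggregate the resulting per-node inequalities using a single structural identity. First I would observe that, since the constraint $\Xi' r_i\leq R_i$ must hold for \emph{every} $s_i\in\widehat{\mathcal{S}}$ while $\Xi'$ is maximized, and since $r_i>0$ on $\widehat{\mathcal{S}}$ by definition, the largest feasible value is
\begin{equation}
MTLF=\min_{s_i\in\widehat{\mathcal{S}}}\frac{R_i}{r_i},\qquad MTRF=1-\max_{s_i\in\widehat{\mathcal{S}}}\frac{r_i}{R_i},
\end{equation}
the latter obtained by rewriting $(1-\Xi)R_i\geq r_i$ as $\Xi\leq 1-r_i/R_i$. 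These closed forms are the workhorse for everything that follows.

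The key identity is that the aggregate offered resources, summed over the involved supply nodes, equal the total requested load. Indeed, in the proportional model $\rho_{(s_i,d_g)}=w_{ig}L_g$ with $\sum_{i=1}^{S}w_{ig}=1$ for each $d_g\in\mathcal{D}$, so
\begin{equation}
\sum_{s_i\in\mathcal{S}}r_i=\sum_{g=1}^{D}L_g\sum_{i=1}^{S}w_{ig}=\sum_{g=1}^{D}L_g,
\end{equation}
and since $r_i=0$ for $s_i\notin\widehat{\mathcal{S}}$ this is the same as $\sum_{s_i\in\widehat{\mathcal{S}}}r_i=\sum_{g=1}^{D}L_g$. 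This is the only place the weight-normalization constraint enters.

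For the $MTLF$ bound I would use that a minimum is at most each of its terms: $MTLF\,r_i\leq R_i$ for all $s_i\in\widehat{\mathcal{S}}$. Summing over $\widehat{\mathcal{S}}$ and invoking the identity gives $MTLF\sum_{g=1}^{D}L_g\leq\sum_{s_i\in\widehat{\mathcal{S}}}R_i\leq\sum_{s=1}^{S}R_s$, and dividing yields the claim. The $MTRF$ bound is dual: $1-MTRF\geq r_i/R_i$ gives $(1-MTRF)R_i\geq r_i$ for each $s_i\in\widehat{\mathcal{S}}$; summing and using the identity together with $\sum_{s_i\in\widehat{\mathcal{S}}}R_i\leq\sum_{s=1}^{S}R_s$ produces $(1-MTRF)\geq\big(\sum_{g=1}^{D}L_g\big)/\big(\sum_{s=1}^{S}R_s\big)$, which rearranges to $MTRF\leq 1-\big(\sum_{g=1}^{D}L_g\big)/\big(\sum_{s=1}^{S}R_s\big)$.

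There is no genuine analytic obstacle here; the argument is an averaging bound and the work is mostly bookkeeping. The one step deserving care is the passage from a per-node extremum to an aggregate: I relax the binding node to the full sum via ``the minimum is at most each term'' (resp. ``the maximum is at least each term''), and then loosen the supply side from $\widehat{\mathcal{S}}$ to $\mathcal{S}$. Both relaxations are what make the bound universal over all resource configurations, and I would note that the companion theorem exhibits a configuration meeting them with equality, so the bound is tight.
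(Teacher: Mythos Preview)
Your argument is correct. The paper's own proof is essentially a one-line hint (``proof by contradiction on the definitions of $MTLF$ and $MTRF$ considering the free capacities of the supply nodes''), and your direct averaging argument is precisely the content that hint points to: the per-node bounds $MTLF\, r_i\le R_i$ and $(1-MTRF)R_i\ge r_i$, summed over $\widehat{\mathcal{S}}$ and combined with $\sum_{s_i\in\widehat{\mathcal{S}}}r_i=\sum_g L_g$, are exactly what a contradiction argument on the free capacities would unpack. The only cosmetic difference is that you argue directly rather than by contradiction, and you make the relaxation $\sum_{s_i\in\widehat{\mathcal{S}}}R_i\le\sum_{s=1}^{S}R_s$ explicit; both are harmless and the approaches are substantively the same.
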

\vspace{1.5mm}
\begin{proof}
The result can be derived using the proof by contradiction on the definitions of $MTLF$ and $MTRF$ considering the free capacities of the supply nodes.
\end{proof}
\vspace{1.5mm}
\begin{theorem}
Any resource configuration $W^*=[w^*_{ig}]_{s_i\in \mathcal{S}, d_g\in \mathcal{D}}$ satisfying the following criteria results in a network with the highest attainable values of both $MTLF$ and $MTRF$:
\begin{equation}\label{eq:dum}
\sum_{g=1}^{D} w^*_{ig}L_g = \frac{R_{i}}{\sum_{j=1}^{S}R_{j}} \sum_{g=1}^{D} L_g \; \forall s_i\in \mathcal{S}.
\end{equation}
\end{theorem}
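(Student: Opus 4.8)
The plan is to show that the stated criterion forces the ratio $r_i/R_i$ to be the same across all supply nodes, and that this common ratio is precisely the value at which both $MTLF$ and $MTRF$ attain the upper bounds established in the preceding lemma. Once the ratio is equalized, the minimum and maximum in the two metric definitions become trivial, and each supply node hits the global bound simultaneously.

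First I would rewrite the criterion in terms of the aggregate offered resources. Since $r_i = \sum_{g=1}^{D}\rho_{(s_i,d_g)} = \sum_{g=1}^{D} w_{ig} L_g$, the condition \eqref{eq:dum} is exactly $r_i = \frac{R_i}{\sum_{j=1}^{S} R_j}\sum_{g=1}^{D} L_g$ for all $s_i\in\mathcal{S}$, so that $\frac{r_i}{R_i} = \frac{\sum_{g=1}^{D}L_g}{\sum_{j=1}^{S}R_j}$ is a constant independent of $i$; denote it $\alpha$. Because $R_i>0$ and $\sum_{g}L_g>0$, every $r_i$ is strictly positive, hence $\widehat{\mathcal{S}}=\mathcal{S}$ and both metrics range over all supply nodes. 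Note also that the assumption $\sum_s R_s > \sum_g L_g$ gives $\alpha\in(0,1)$, which keeps the optimizers inside the prescribed domains $\Xi'>1$ and $0<\Xi<1$.

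Next I would unfold the definitions. The constraint $\Xi' r_i\le R_i$ for every $s_i$ is equivalent to $\Xi'\le R_i/r_i$, whence $MTLF = \min_{s_i\in\mathcal{S}} R_i/r_i$; likewise $(1-\Xi)R_i\ge r_i$ yields $MTRF = 1-\max_{s_i\in\mathcal{S}} r_i/R_i$. Substituting the constant ratio, $R_i/r_i = 1/\alpha$ and $r_i/R_i=\alpha$ for every $i$, so the min and the max are attained by all nodes at once and
\[
MTLF = \frac{\sum_{s=1}^{S} R_s}{\sum_{g=1}^{D}L_g}, \qquad MTRF = 1 - \frac{\sum_{g=1}^{D}L_g}{\sum_{s=1}^{S}R_s}.
\]
Comparing these with the upper bounds of the previous lemma shows they are exactly the largest attainable values, which proves the claim.

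I expect no serious obstacle in the computation itself; the one point meriting care is confirming that such a $W^*$ is feasible, so that the statement is not vacuous. I would dispatch this by exhibiting the explicit choice $w^*_{ig} = R_i/\sum_{j=1}^{S} R_j$ for all $g$, which satisfies the demand-functionality condition $\sum_{i=1}^{S} w^*_{ig}=1$ and meets \eqref{eq:dum} by a one-line summation. The conceptual heart of the argument is simply the observation that equalizing $r_i/R_i$ decouples the per-node stability margins from the heterogeneity of the $R_i$, so that every supply node reaches the global bound together rather than having a single bottleneck node cap the metric below the bound.
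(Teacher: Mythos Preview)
Your proposal is correct and follows essentially the same approach as the paper: both arguments substitute the configuration \eqref{eq:dum} into the stability constraints, observe that the per-node margin (you phrase it via the ratio $r_i/R_i$, the paper via the free capacity $C_i$) is identical across all supply nodes, and then verify that the resulting $MTLF$ and $MTRF$ values coincide with the upper bounds from the preceding lemma. Your version adds the feasibility check and the explicit note that $\widehat{\mathcal{S}}=\mathcal{S}$, which the paper leaves implicit, but the core logic is the same.
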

\begin{proof}
Considering~\eqref{eq:dum},
$C_i=R_i-\sum_{g=1}^{D} w^*_{ig}L_g=R_i-\frac{R_{i}}{\sum_{j=1}^{S}R_{j}} \sum_{g=1}^{D} L_g$, $\forall s_i\in \mathcal{S}$. By increasing the loads from $\sum_{g=1}^{D}L_g$ to $\sum_{g=1}^{D}\Xi' L_g$, where $\Xi'>1$, the free capacity is given by $\hat{C}_i=R_i-\frac{\Xi' R_{i}}{\sum_{j=1}^{S} R_{j}} \sum_{g=1}^{D} L_g$, $\forall s_i\in \mathcal{S}$. The $MTLF$ can be derived by solving $C_i=0,\; \forall s_i\in \mathcal{S}$, which results in $ MTLF= \frac{\sum_{s=1}^{S} R_s }{\sum_{g=1}^{D}L_g}$. Also, $MTRF$ can be obtained considering the free capacity of the supply nodes after resource decrement, $\hat{C}_i=(1-\Xi) R_i-\frac{R_{i}}{\sum_{j=1}^{S}R_{j}} \sum_{g=1}^{D} L_g$, solving $\hat{C}_i=0,\; \forall s_i\in \mathcal{S}$, results in $ MTRF=1-\frac{\sum_{g=1}^{D}L_g}{\sum_{s=1}^{S} R_s}$.
\end{proof}

\section{Resource Configuration under Heterogeneous Geographical Constraints}\label{sec:cost-based}
Consider a scenario in which allocation of resources among different pairs of demand and supply nodes is associated with different costs. The cost can represent geographical constraints or, in general, heterogeneous tendencies of supply nodes toward resource allocation among different demand nodes. We model the allocation cost between $s_i \in \mathcal{S}$ and $d_g\in \mathcal{D}$ with corresponding allocated resource $\rho_{(s_i,d_g)}$ as:
\begin{equation}\label{eq:cost}
    C(\rho_{(s_i,d_g)} )=\alpha_{ig}\left(\rho_{(s_i,d_g)} \right)^{\beta_{ig}},
\end{equation}
where $\alpha_{..}\in \mathbb{R}^+$ and $\beta_{..} \in (1,+\infty) $ are node specific parameters capturing the inherent heterogeneities. Note that our approach can be applied to any convex and increasing cost function and the cost modeling here is just for mathematical convenience.\footnote{The results provided in this section are revisited for another general family of cost functions in Section 3 of the supplementary materials.} In the following, we first obtain the precise values of allocated resources among the nodes to achieve the highest robustness while minimizing the allocation cost. Afterward, we provide an algorithm to reduce the allocation cost of a given network via re-adjusting the resource configuration.
\subsection{Achieving the highest robustness while minimizing the allocation cost}
In~\eqref{eq:firstAns4} and  \eqref{eq:dum} the total offered resources from each supply node is derived to achieve the highest robustness without specifying the exact resource configuration between each pair of supply and demand node. Considering the heterogeneous costs of resource allocation among the nodes, to identify the cost-effective resource configuration with the highest robustness, we propose the following optimization problem, which determines the precise value of allocated resources for all pairs of nodes $\widetilde{\mathbf{P}^*}=[\widetilde{\rho^*}_{(s_i,d_g)}]_{s_i\in \mathcal{S}, d_g \in \mathcal{D}}$: 
\vspace{1.5mm}
\begin{flalign}
   &\widetilde{\mathbf{P}^*}=\argmin_{\widetilde{\mathbf{P}}=[\widetilde{\rho}_{(s_k,d_g)}]_{s_k\in \mathcal{S}, d_g \in \mathcal{D}}} \sum_{s_i\in \mathcal{S}} \sum_{d_j \in D} C(\widetilde{\rho}_{(s_i,d_g)} )
   \\
   (\bm{\mathcal{P}2})\hspace{3mm} &\textrm{s.t.}\nonumber \\
   &\sum_{g=1}^{D}\widetilde{\rho}_{(s_i,d_g)}= \sum_{g=1}^{D}\rho^*_{(s_i,d_g)},\;\;  \forall s_i \in \mathcal{S},\label{eq:41}\\
   &\sum_{i=1}^{S} \widetilde{\rho}_{(s_i,d_g)} =L_g,\;\;  \forall d_g \in \mathcal{D}, \label{eq:42} \\
   &\widetilde{\rho}_{(s_i,d_g)} \geq 0, \;\;\forall s_i \in \mathcal{S},d_g \in \mathcal{D},\label{eq:43} 
\end{flalign}
where the right hand side of the first constraint denotes the solutions obtained from the previous section for either uniform or proportional load/resource fluctuations scenario (\eqref{eq:firstAns4}, \eqref{eq:dum}). To obtain the solution, consider the Lagrangian multipliers associated with the first, the second, and the third constraint as $\bm{\Lambda}=\left[\lambda_1,\lambda_2,\cdots,\lambda_S \right]$, $\mathbf{Z}=\left[\zeta_1,\zeta_2,\cdots,\zeta_D \right]$, and $\bm{\Phi}=\left[\phi_{ig}\right]_{s_i\in \mathcal{S}, d_g \in \mathcal{D}}$, respectively. The corresponding Lagrangian function of $\bm{\mathcal{P}2}$ is given by:
\vspace{1.5mm}
\begin{equation}
\begin{aligned}
   & L(\mathbf{P},\bm{\Lambda},\mathbf{Z},\bm{\Phi})=\sum_{s_i\in \mathcal{S}} \sum_{d_g \in D}\alpha_{ig}\left(\widetilde{\rho}_{(s_i,d_g)} \right)^{\beta_{ig}}\\\
   &+\sum_{s_i \in \mathcal{S}} \lambda_i \left(\sum_{d_g\in D} \widetilde{\rho}_{(s_i,d_g)} -X^*_i\right)\\
   &+\sum_{d_g\in D} \zeta_g \left(\sum_{s_i \in \mathcal{S}} \widetilde{\rho}_{(s_i,d_g)} -L_g\right) \\
   &- \sum_{s_i\in \mathcal{S}} \sum_{d_g \in D}\phi_{ig}\widetilde{\rho}_{(s_i,d_g)},
    \end{aligned}
\end{equation}
where $X^*_i= \sum_{g=1}^{D}\rho^*_{(s_i,d_g)}$.
From the \textit{stationarity} condition of KKT conditions~\cite{ConvexOptimization}, we get:
\begin{flalign}
    \partial L(\mathbf{P},\bm{\Lambda},\mathbf{Z},\bm{\Phi})/\partial \widetilde{\rho}_{(s_i,d_g)}=\alpha_{ig}\beta_{ig}\left(\widetilde{\rho}_{(s_i,d_g)})\right)^{\beta_{ig}-1}\nonumber \\
    +\lambda_i+\zeta_g-\phi_{ig}=0,
    \end{flalign}
which results in:
\vspace{1.5mm}
\begin{equation}
  \widetilde{ \rho^*}_{(s_i,d_g)}=  \left(\frac{-\lambda_i-\zeta_g+\phi_{ig}}{\beta_{ig}\alpha_{ig}}\right)^{\frac{1}{\beta_{ig}-1}}.
\end{equation}
As can be seen, $\widetilde{ \rho^*}(.,.)$ is a function of the Lagrangian multipliers, the values of which can be found by applying the gradient descent method~\cite{ConvexOptimization} on the following dual problem:
\begin{equation}
\max_{\bm{\Lambda} \in \mathbb{R}^{S},\mathbf{Z} \in \mathbb{R}^{D}, \bm{\Phi}\in \left( \mathbb{R}^+ \right)^{S\times D}} D(\bm{\Lambda},\mathbf{Z},\bm{\Phi}),
\end{equation}
where the dual function $D(\bm{\Lambda},\mathbf{Z},\bm{\Phi})$ is given by:
\begin{flalign}
   \hspace{-1mm} D(&\bm{\Lambda},\mathbf{Z},\bm{\Phi})= \sum_{s_i\in \mathcal{S}} \sum_{d_g \in D}\alpha_{ig}\left(\frac{-\lambda_i-\zeta_g+\phi_{ig}}{\beta_{ig}\alpha_{ig}}\right)^{\frac{\beta_{ig}}{\beta_{ig}-1}}\nonumber\\
   &+\sum_{s_i \in \mathcal{S}} {\lambda_i} \left(\sum_{d_g\in D} \left(\frac{-\lambda_i-\zeta_g+\phi_{ig}}{\beta_{ig}\alpha_{ig}}\right)^{\frac{1}{\beta_{ig}-1}} -X^*_i\right)\nonumber\\
   &+\sum_{d_g\in D} {\zeta_g} \left(\sum_{s_i \in \mathcal{S}} \left(\frac{-\lambda_i-\zeta_g+\phi_{ig}}{\beta_{ig}\alpha_{ig}}\right)^{\frac{1}{\beta_{ig}-1}} -L_g\right)\nonumber\\
   &- \sum_{s_i\in \mathcal{S}} \sum_{d_g \in D}\phi_{ig}\left(\frac{-\lambda_i-\zeta_g+\phi_{ig}}{\beta_{ig}\alpha_{ig}}\right)^{\frac{1}{\beta_{ig}-1}}.
    \end{flalign}   
Since $\bm{\mathcal{P}2}$ is a convex optimization problem with affine constraints, the optimal solution of the dual problem will coincide with the optimal solution of the original problem. 
	\begin{algorithm}[t]
	{\footnotesize
		\caption{An iterative algorithm for cost reduction while incurring the least loss in robustness}\label{alg:2}
		\SetKwFunction{Union}{Union}\SetKwFunction{FindCompress}{FindCompress}
		\SetKwInOut{Input}{input}\SetKwInOut{Output}{output}
		\Input{The network configuration $\rho_{(s_k,d_g)} \; \forall s_k \in \mathcal{S}, d_g \in \mathcal{D}$, Reduction step size $\epsilon$, Objective Cost $C^O$}
		\Output{The load allocation configuration $\rho_{(s_k,d_g)} \;  \forall s_k \in \mathcal{S}, d_g \in \mathcal{D}$}
		Calculate the cost: $C^I=\displaystyle\sum_{k=1}^{S}\displaystyle \sum_{g=1}^{D}\alpha_{kg}\left(\rho_{(s_k,d_g)} \right)^{\beta_{kg}}$\\
		$\Delta_C=C^I-C^O$\\
		\While{$\Delta_C>0$}{
			Calculate  $C'_{kg}=\frac{\partial \alpha_{kg}\left(\rho_{(s_k,d_g)} \right)^{\beta_{kg}} }{\partial \rho_{(s_k,d_g)}} \;  \forall s_k \in \mathcal{S}, d_g \in \mathcal{D}$\\
			$(k^*,g^*)=\displaystyle\argmax_{1\leq k \leq S, 1\leq g\leq D} C'_{kg}$ (break ties uniformly at random)\\
			Among the set of solutions of $(\tilde{k},\tilde{g})=\displaystyle\argmin_{\forall s_k \in \mathcal{S}, d_g \in \mathcal{D}} C'_{kg}$ find the supply node $s_{\hat{k}^*}$ as follows (break ties uniformly at random):
			\begin{equation}\label{eq:midAlg}
			\begin{cases}
		\textrm{Uniform}:	\hat{k}^* = \displaystyle\argmax_{ \hat{k} \in \tilde{k} } R_{\hat{k}}-\sum_{g=1}^{D} \rho_{(s_{\hat{k}},d_g)},\\
		\textrm{Proportional}:    \hat{k}^* = \displaystyle\argmax_{ \hat{k} \in \tilde{k}}  R_{\hat{k}}/ \sum_{g=1}^{D} \rho_{(s_{\hat{k}},d_g)}.
			\end{cases}
			\end{equation}\\
           Tune the reduction size: $\kappa=\min(\epsilon,\rho_{(s_{k^*},d_{g^*})},C_{\hat{k}^*})$\\
		   Re-adjust the resource: $\rho_{(s_{k^*},d_{g^*})}= \rho_{(s_{k^*},d_{g^*})}-\kappa$\\
		   Re-adjust the resource: $\rho_{(s_{\hat{k}^*},d_{g^*})}= \rho_{(s_{\hat{k}^*},d_{g^*})}+\kappa$\\
		   $C^I=\displaystyle\sum_{k=1}^{S}\displaystyle \sum_{g=1}^{D}\alpha_{kg}\left(\rho_{(s_k,d_g)} \right)^{\beta_{kg}}$\\
		   \vspace{1.5mm}
		   $\Delta_C=C^I-C^O$\\	
		}
		
	}
\end{algorithm}	
\subsection{Cost reduction while incurring the least loss in robustness}\label{sec:cost-cosntraitn}

For a given network, the network operator may need to reduce the resource allocation cost due to economical situations or upon change of link cost parameters among the nodes. In this situation, it is more reasonable to reduce the allocation cost by locally re-adjusting the resources offered through a portion of links as compared to (re-)designing the entire network. Moreover, to prevent cascading failures from initiation in the final network, it is desired to  incur the least loss in the robustness while performing the resource re-adjustment. Let $C^I$ and $C^O$ denote the allocation cost of the initial network and the objective cost, respectively, and $\Delta_C=C^I-C^O$. We propose an iterative resource re-adjustment method, described in Algorithm~\ref{alg:2}, aiming to reduce the maximum attainable cost at each iteration while imposing the least impact on the robustness. At each iteration of this algorithm, first the partial derivatives of the link costs with respect to the allocated resources are obtained. Then, the resources allocated to the link with the highest derivative is decreased by a step-size variable $\epsilon\in \mathbb{R}^+$. Note that for a given step-size variable $\epsilon\in \mathbb{R}^+$, the resource reduction through that link is associated with the maximum attainable cost reduction among all the links.  Afterward, the emerged resource deficiency of the associated demand node to that link is compensated through allocating extra resources from the supply node with the lowest associated cost derivative that has the highest tolerance of resource fluctuations  (uniform or proportional depending on the context) obtained from~\eqref{eq:midAlg}. As can be construed, this resource compensation procedure ensures the smallest impact on the robustness of the network.
\subsubsection{On the feasibility of cost reduction}
Considering the resource satisfaction paradigm (\eqref{stableSupply},~\eqref{stabledemand}), our algorithm is not capable of achieving any arbitrary desired cost $C^O$. This parameter is lower-bounded with respect to the link cost parameters and the feasible resource configuration schemes. In the following proposition, we derive the minimum achievable resource sharing cost. 
\vspace{1.5mm}
\begin{proposition}
Assume $\bm{\Lambda}=\left[\lambda_1,\lambda_2,\cdots,\lambda_S \right]$, $\mathbf{Z}=\left[\zeta_1,\zeta_2,\cdots,\zeta_D \right]$, and $\bm{\Phi}=\left[\phi_{ig}\right]_{s_i\in \mathcal{S}, d_g \in \mathcal{D}}$. The minimum attainable budget for successful cost reduction using Algorithm~\ref{alg:2} $C^O_{min}$ is given by $C^O_{min}=\sum_{s_i\in \mathcal{S}} \sum_{d_g \in D}\alpha_{ig}\left(\widetilde{\rho^*}_{(s_i,d_g)} \right)^{\beta_{ig}}$, where $
  \widetilde{ \rho^*}_{(s_i,d_g)}= \left(\frac{-\lambda_i-\zeta_g+\phi_{ig}}{\beta_{ig}\alpha_{ig}}\right)^{\frac{1}{\beta_{ig}-1}}$,
for which the value of the coefficients can be found by applying the gradient descent method on the following optimization problem:\footnote{The mentioned value for $C^O_{min}$ is in fact the minimum achievable cost of allocation that guarantees the stability of the nodes.}
\vspace{1.5mm}
\begin{equation}
[\bm{\Lambda^*},\mathbf{Z^*}, \bm{\Phi^*}]=\argmax_{\bm{\Lambda} \in \mathbb{(R^+)}^{S},\mathbf{Z} \in \mathbb{R}^{D}, \bm{\Phi}\in \left( \mathbb{R}^+ \right)^{S\times D}} U(\bm{\Lambda},\mathbf{Z},\bm{\Phi}),
\end{equation}
where
\vspace{1.5mm}
\begin{flalign}
   \hspace{6mm} U(&\bm{\Lambda},\mathbf{Z},\bm{\Phi})= \sum_{s_i\in \mathcal{S}} \sum_{d_g \in D}\alpha_{ig}\left(\frac{-\lambda_i-\zeta_g+\phi_{ig}}{\beta_{ig}\alpha_{ig}}\right)^{\frac{\beta_{ig}}{\beta_{ig}-1}}\nonumber\\
   &+\sum_{s_i \in \mathcal{S}}  {\lambda_i}\left(\sum_{d_g\in D} \left(\frac{-\lambda_i-\zeta_g+\phi_{ig}}{\beta_{ig}\alpha_{ig}}\right)^{\frac{1}{\beta_{ig}-1}} -R_i\right)\nonumber\\
   &+\sum_{d_g\in D} {\zeta_g} \left(\sum_{s_i \in \mathcal{S}}  \left(\frac{-\lambda_i-\zeta_g+\phi_{ig}}{\beta_{ig}\alpha_{ig}}\right)^{\frac{1}{\beta_{ig}-1}} -L_g\right)\nonumber\\
   &- \sum_{s_i\in \mathcal{S}} \sum_{d_g \in D}\phi_{ig}\left(\frac{-\lambda_i-\zeta_g+\phi_{ig}}{\beta_{ig}\alpha_{ig}}\right)^{\frac{1}{\beta_{ig}-1}}.
    \end{flalign}  
\end{proposition}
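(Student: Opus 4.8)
The plan is to show that the lowest cost Algorithm~\ref{alg:2} can reach coincides with the global minimum of the resource-allocation cost taken over all \emph{stable} configurations, and then to characterize that global minimum by convex duality in essentially the same way that was done for $\bm{\mathcal{P}2}$. First I would pin down the feasible region in which the algorithm actually moves. Each iteration subtracts $\kappa$ from the link $(s_{k^*},d_{g^*})$ and adds the same $\kappa$ to a link $(s_{\hat k^*},d_{g^*})$ serving the \emph{same} demand node, so the aggregate received resource $l_{g^*}$ is left unchanged; since $\kappa\le\rho_{(s_{k^*},d_{g^*})}$ non-negativity is preserved and since $\kappa\le C_{\hat k^*}$ the supply-stability bound $r_i\le R_i$ is never violated. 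Hence, starting from the robustness-optimal configuration (for which $l_g=L_g$), every iterate remains a stable allocation obeying $\sum_{s_i}\rho_{(s_i,d_g)}=L_g$ and $\sum_{d_g}\rho_{(s_i,d_g)}\le R_i$. Consequently no cost below the infimum of the total cost over this set is attainable, and the claim reduces to identifying that infimum.

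Next I would formulate the governing program: minimize $\sum_{s_i,d_g}\alpha_{ig}(\widetilde{\rho}_{(s_i,d_g)})^{\beta_{ig}}$ subject to the supply-stability inequalities $\sum_{d_g}\widetilde{\rho}_{(s_i,d_g)}\le R_i$, the demand equalities $\sum_{s_i}\widetilde{\rho}_{(s_i,d_g)}=L_g$, and $\widetilde{\rho}_{(s_i,d_g)}\ge 0$. This differs from $\bm{\mathcal{P}2}$ in exactly one place: the per-supply budget is now the physical capacity $R_i$ imposed as an \emph{inequality}, rather than the fixed value $X^*_i$ imposed as an equality. Accordingly the multiplier $\lambda_i$ attached to it must be sign-constrained, $\lambda_i\ge 0$, which is precisely why the dual is maximized over $\bm{\Lambda}\in(\mathbb{R}^+)^S$ while $\mathbf{Z}$ remains in $\mathbb{R}^D$ (the demand constraint is tight at any minimizer, since over-supplying only raises the increasing cost, so it may be treated as an equality). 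Because $\beta_{ig}>1$ each summand is strictly convex and all constraints are affine, the program is convex, and the standing assumption $\sum_k R_k>\sum_g L_g$ supplies a strictly feasible point, so Slater's condition holds and the duality gap is zero.

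Then I would write the Lagrangian, impose the stationarity KKT condition $\partial L/\partial\widetilde{\rho}_{(s_i,d_g)}=\alpha_{ig}\beta_{ig}\widetilde{\rho}_{(s_i,d_g)}^{\,\beta_{ig}-1}+\lambda_i+\zeta_g-\phi_{ig}=0$, and solve it to obtain $\widetilde{\rho}^*_{(s_i,d_g)}=\big((-\lambda_i-\zeta_g+\phi_{ig})/(\beta_{ig}\alpha_{ig})\big)^{1/(\beta_{ig}-1)}$, exactly the stated form. Substituting this back into the Lagrangian eliminates the primal variables and collapses it to the dual objective $U(\bm{\Lambda},\mathbf{Z},\bm{\Phi})$ displayed in the statement, the only bookkeeping change relative to the dual of $\bm{\mathcal{P}2}$ being $X^*_i\mapsto R_i$ in the $\lambda_i$-term. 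By strong duality, maximizing $U$ over the stated domain returns the optimal multipliers, and evaluating the cost at the resulting $\widetilde{\rho}^*$ yields $C^O_{min}$; strict convexity further guarantees this minimizer is unique.

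The main obstacle I anticipate is not the KKT calculus, which is a near-verbatim repeat of the $\bm{\mathcal{P}2}$ derivation, but the \emph{attainability} half: arguing that the greedy two-link transfers of Algorithm~\ref{alg:2} can genuinely drive the cost to the global minimum instead of stalling at a spurious stationary point. I would resolve this by observing that any two stable configurations sharing the same $\{l_g\}$ differ by a circulation on the bipartite graph that decomposes into elementary single-demand transfers of exactly the type the algorithm performs; along the steepest such transfer (the link of maximal cost-derivative feeding into the one of minimal cost-derivative, as selected in~\eqref{eq:midAlg}) the convex cost strictly decreases whenever the current iterate fails the KKT conditions. Therefore the only fixed points of the iteration are the unique minimizer, so the infimum attainable by the algorithm is precisely $C^O_{min}$, and no smaller objective cost $C^O$ can be met.
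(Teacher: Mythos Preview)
Your approach is essentially the one the paper takes. Although the paper defers the proof to the supplementary materials, the intended argument is transparent from the main text: the footnote identifies $C^O_{\min}$ with the minimum allocation cost over all stable configurations, and the displayed dual $U(\bm{\Lambda},\mathbf{Z},\bm{\Phi})$ is line-for-line the dual $D(\bm{\Lambda},\mathbf{Z},\bm{\Phi})$ of $\bm{\mathcal{P}2}$ with the single substitution $X^*_i\mapsto R_i$ and the sign constraint $\bm{\Lambda}\in(\mathbb{R}^+)^S$ reflecting that the per-supply constraint is now the inequality $\sum_{d_g}\widetilde{\rho}_{(s_i,d_g)}\le R_i$ rather than an equality. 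You correctly spotted both of these changes and reproduced the KKT stationarity computation verbatim, so the duality half of your write-up matches the paper.

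Where you go a bit beyond the paper is in the attainability direction: you explicitly argue that the two-link, same-demand transfers of Algorithm~\ref{alg:2} keep every iterate stable (hence the cost cannot drop below the stable minimum) and, conversely, that the greedy transfer strictly decreases cost unless the KKT conditions already hold. The paper, as far as the main text indicates, is content to equate $C^O_{\min}$ with the minimum stable cost via the footnote and does not spell out a convergence argument for the algorithm itself. Your circulation-decomposition sketch is the right intuition, though to make it airtight you would also need to handle the boundary case where the maximal-derivative link coincides with the minimal-derivative one for every demand node yet the configuration is still suboptimal (ruled out by KKT, but worth one sentence). This extra care is a bonus rather than a departure; the core dual derivation is the same.
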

\begin{proof}
Please refer to the supplementary materials.
\end{proof}
\section{Stopping/confining Cascading Failures while Maintaining a High Robustness}\label{sec:confine}
\noindent  So far, the design goal was to prevent cascading failures from happening. In this section, we aim to develop an adaptation scheme tailored for demand-supply networks to confine ongoing cascading failures in the network. In particular,
considering any arbitrary demand-supply network, once cascading failures start to propagate, we aim to achieve the following goals simultaneously: i) stopping/confining the ongoing cascading failures, ii) providing the surviving network with a high robustness to avoid further failures. To achieve this, we introduce and deploy two network adaptation mechanisms: \textit{intentional failure} and \textit{resource re-adjustment}. In the intentional failure mechanism, we deliberately cut all the edges connecting to the demand node with the largest resource deficiency, making it isolated from the supply layer. This can be thought as sacrificing a demand node to (potentially) save the rest of the network. The resource re-adjustment mechanism is composed of the following parts. i)~We compensate for the resource deficiencies of the demand nodes via utilizing the spare resources of the supply nodes with the highest tolerance of resource fluctuations in an iterative manner. This procedure ensures the least impact on the robustness of the network while compensating for resource deficiencies. ii) By re-allocating the offered resources from the supply nodes with low tolerance of fluctuations to those with high tolerance (larger capacities), we modify the resource configuration scheme to achieve a higher robustness. This stage is performed to balance the free capacities of the nodes in the surviving network, which results in a higher robustness. As compared to the current state-of-the-art in adaptability of interdependent networks, e.g., \cite{RecoveryOfInterdependentNetworks,StrategyForStoppingFailureCascadesInInterdependentNetworks}, which mainly focus on rehabilitating the failed nodes to confine cascading failures, our approach is fundamentally different due to the distinct cascading failure mechanism in demand-supply networks. Our approach can be recognized as \textit{smart failing} of the nodes and \textit{sequential resource re-adjustment}, which simultaneously confine cascading failures and enhance the robustness. 
\begin{algorithm}[t]
		{\footnotesize
			\caption{An iterative algorithm for confining a cascading failure while increasing the robustness of the surviving network}\label{alg:3}
			\SetKwFunction{Union}{Union}\SetKwFunction{FindCompress}{FindCompress}
			\SetKwInOut{Input}{input}\SetKwInOut{Output}{output}
			\Input{The network configuration $\rho_{(s_k,d_g)}(0) \forall s_k\in\mathcal{S},d_g\in\mathcal{D}$, the set of failed demand nodes at time zero $\mathcal{F}_s(0)$, maximum number of intentional failures $\Gamma$, maximum amount of resource re-adjustments $\Upsilon$, load and resources of the nodes.}
			t=0\\
			Derive the set of operational supply nodes $\mathcal{O}_S(t)= \mathcal{S}\setminus \mathcal{F}_s(t)$\label{line3alg3}.\\
			Derive the set of operable demand nodes before the failure $\mathcal{O}_D(t)$.\\
			Derive the set of demand nodes with resource deficiency ${\Delta}_D(t)$.\\
            Let $\delta_g$ denote the amount of resource deficiency at $d_g(t) \in {\Delta}_D(t)$.\\
			Let the sequence $\{d_{i_k}\}_{k=1}^{|{\Delta}_D(t)|}$ denote the elements of ${\Delta}_D(t)$ sorted in descending order with respect to the amount of resource deficiency.  \\
			$\Xi(t)=\sum_{s_i\in \mathcal{O}_s(t)} C_i$\\
			$\gamma=1$\\
           \nonl $\backslash \backslash$ Perform intentional failures:\\
			\While{$\gamma \leq \Gamma$ \textrm{and} $\sum_{d_k\in \mathcal{O}_D(t)} d_g(t) > \Xi(t)$}{
				
					$\mathcal{O}_D(t)=\mathcal{O}_D(t) \setminus d_{i_{\gamma}} $\\
                    ${\Delta}_D(t)={\Delta}_D(t) \setminus d_{i_{\gamma}}$\\
					$\gamma=\gamma+1$
			}
            \nonl $\backslash \backslash$ Check for the feasibility of satisfying the resource deficiencies:\\
		\If{$\sum_{d_k\in \mathcal{O}_D(t)} d_g(t) \leq \Xi(t)$ and $\sum_{d_k\in {\Delta}_D(t)} \delta_g(t) \leq \Upsilon$}{
			Perform resource re-adjustment using Algorithm~\ref{alg:4}.
		}\Else{
		Let the cascading failure continue for one time instant and obtain  $\mathcal{F}_s(t+1)$, $\mathcal{O}_D(t+1)$, and ${\Delta}_D(t+1)$ \\
		$t=t+1$\\
		go to line \ref{line3alg3}.	
	}
						
		}
		\vspace{-1mm}
	\end{algorithm}	  
We consider a realistic scenario in which the network operator has a limited capability/ability to perform network adaptation mechanisms. In particular, we address the network operator's capability by two parameters $\Gamma \in \mathbb{Z}^+ \cup \{0\}$ and $\Upsilon\in \mathbb{R}^+\cup \{0\}$ indicating the number of intentional failures and the total amount of resource re-adjustments that can be executed at each time instant, respectively. The pseudo-code of our proposed network adaptation algorithm is given in Algorithm~\ref{alg:3}. At each stage of cascading failures, the algorithm first identifies the best set of candidates for intentional failures, which consists of those demand nodes with the highest resource deficiencies. Failing those demand nodes results in the minimum required amount of extra resource to stabilize the network. Afterward, if it predicts that the cascading failures can be stopped using the spare resources of the supply nodes, i.e., the resource deficiencies can be fulfilled via feasible resource re-adjustments, it performs the resource re-adjustments to stop cascading failures and continues the re-adjustments to balance the free capacities of the supply nodes using Algorithm~\ref{alg:4}. Note that at the current stage of cascading failures, if the algorithm predicts that it cannot confine the failures, it waits until next time instant to (possibly) handle less amount of resource deficiencies due to potential failures.

In the end, there are two worthwhile remarks regarding our proposed algorithm: i) The resource re-adjustment stage of the algorithm can be applied by itself to any demand-supply network to increase the robustness. ii) Instead of improving the robustness, after the cascading failures stop, the resource re-adjustment can also be implemented from an alternative perspective to minimize the associated costs, similar to Algorithm~\ref{alg:2}.
\noindent\begin{minipage}{0.486\textwidth}
\renewcommand\footnoterule{}     
    \removelatexerror
	\begin{algorithm}[H]\label{alg:4}
		{\footnotesize
			\caption{\mbox{Iterative resource re-adjustment procedure}}
			\SetKwFunction{Union}{Union}\SetKwFunction{FindCompress}{FindCompress}
			\SetKwInOut{Input}{input}\SetKwInOut{Output}{output}
			\Input{The network configuration $\rho_{(s_k,d_g)}\; \forall s_k \in \mathcal{O}_S, d_g\in \mathcal{O}_D$ (see Algorithm~\ref{alg:3}), maximum amount of resource re-adjustments $\Upsilon$, ${\Delta}_D$, $\delta_g$, $\forall d_g \in {\Delta}_D(t)$, load and resources of the nodes, tunable parameter $0<\epsilon<1$.}
			\Output{Re-adjusted configuration $\rho_{(s_k,d_g)} \; \forall s_k \in \mathcal{O}_S, d_g\in \mathcal{O}_D$}
			$\upsilon=0$\\
            \nonl $\backslash \backslash$ Satisfy the resource deficiencies using supply nodes with high tolerance of resource fluctuations:\\
         \While{$|{\Delta}_D|\neq 0$}{
         Select an element of ${\Delta}_D$ at random. Let $d_i$ denote that element.\\
       Find the operable supply node with the highest tolerance of resource fluctuations using~\eqref{eq:midAlg} (break ties uniformly at random). Let $k^*$ denote the index of that node.\\
         Choose a small step size $\nu$ such that $0<\nu<< min(\delta_i,C_{k^*})$\\
         $\rho_{(s_{k^*},d_i)}=\rho_{(s_{k^*},d_i)} +\nu$\\
         $\delta_i=\delta_i-\nu$\\
         $\upsilon= \upsilon+ \nu$\\
         \If{$\delta_i =0$}{
         ${\Delta}_D={\Delta}_D\setminus d_i$
         }
         }
         \nonl $\backslash \backslash$ Use the remaining capability of resource re-adjustments to increase the robustness:\\
         \While{$\upsilon\leq \Upsilon$}{
		Find the operable supply node with the highest tolerance of resource fluctuations using~\eqref{eq:midAlg} (break ties uniformly at random). Let $k'^*$ denote the index of that node.\\
		Find the operable supply node with the lowest tolerance of resource fluctuations with replacing $\argmax$ with $\argmin$ in~\eqref{eq:midAlg} (break ties uniformly at random). Let $k''^*$ denote the index of that node.\\
		$\upsilon= \upsilon+ \sum_{d_g\in \mathcal{O}_D} 2\epsilon \rho_{(s_{k''^*},d_g)}$\\	
			Re-allocate $\epsilon$ fraction of all the offered resources from supply node $k''^*$ to supply node $k'^*$: \footnote{The value of $\epsilon$ should be small enough to avoid overloading node $s_{k'^*}$.}\\
            $\rho_{(s_{k'^*},d_g)}=\rho_{(s_{k'^*},d_g)} +\epsilon \rho_{(s_{k''^*},d_g)}\;\; \forall d_g\in \mathcal{O}_D$.\\
            $\rho_{(s_{k''^*},d_g)}=(1-\epsilon)\rho_{(s_{k''^*},d_g)}\;\; \forall d_g\in \mathcal{O}_D$.
			 }
			 Output $\rho_{(s_k,d_g)} \;\forall s_k \in \mathcal{O}_S, d_g\in \mathcal{O}_D$.
		}
	\end{algorithm}	
	\end{minipage}
{\color{black}	
\begin{remark}
 The intentional failure mechanism can also be studied considering heterogeneous importance of the demand nodes. In this case, there might be a tendency toward maintaining the functionality of important demand nodes regardless of their amount of resource deficiency. 
\end{remark}
\begin{remark}
The problem of interest in this section can also be studied from the perspective of fault detection and mitigation. In that scenario, there might be intended (or unintended) misreports in the value of resources of the supply nodes and the requested loads of the demand nodes. Subsequently, the network manager's goal will be the detection of these misreports and suppressing their potential catastrophic impacts on the network. A similar problem is studied in the context of smart grid networks with a different system model and failure assumptions, e.g.,  \cite{yuan2011modeling22,6148224model,7438916Power}. However, the model proposed in these works does not explicitly focus on the interdependency between the demand and the supply. Consequently, there  is no robustness  analysis  considering  the  resource  provisioning  among  the  nodes  in these works.  Also, the corresponding effect of load fluctuations on the supply layer, resource fluctuations on the demand layer, the uniform/proportional mechanisms of resource/load fluctuations, and the cascading failure mechanism are not considered in these works. Another interesting problem is studying the network recovery after failures, e.g., \cite{wang2011progressiveee,tootaghaj3}, in the context of demand-supply networks. We leave these interesting problems to future work.
\end{remark}}

  \subsection{On the feasibility of mitigating cascading failures in demand-supply networks}  
  In this subsection, we provide some insights on the feasibility of the above-described methodology in some special scenarios with respect to the network operator's capability and the network setting.
  \\
  
  \textbf{A) Capability of the network operator:}
  In general, the capability of the network operator belongs to one of the following cases: i) performing both resource re-adjustment and intentional failures without any restrictions; ii) performing unlimited intentional failures and limited resource re-adjustments; iii) performing unlimited resource re-adjustments and limited intentional failures; iv) performing limited resource re-adjustments and intentional failures. In the following, we provide a discussion on the  performance of Algorithms~\ref{alg:3},~\ref{alg:4} for each case.  
  At any stage of cascading failure, let $\Delta_D=\{\delta_1,\delta_2,\cdots,\delta_{|\Delta_D|}\}$ denote the set of resource deficiencies of those demand nodes having resource deficiency and assume that $\delta_1\geq \delta_2\geq ...\geq \delta_{|\Delta_D|}>0$.  In the first case, it is trivial to verify that no cascading failure will be spread upon using our proposed algorithms. 
  In the second case, it can be verified that the cascading failure will not spread since all the demand nodes with resource deficiency can be isolated from the supply layer. However, failing of all the demand nodes is obviously not the best strategy.
  In this case, the capability of the network operator in performing resource readjustments $\Upsilon$ identifies the minimum required number of failed demand nodes $f^*_d$ to stop an ongoing cascading failure. Mathematically, this parameter can be obtained as:
  \vspace{1mm}
  \begin{equation}\label{eq:con1}
     \Scale[.90]{ f^*_d=\displaystyle\argmin_{f_d\in \mathbb{Z}^+} \sum_{i=1}^{|\Delta_D|} \delta_i- \sum_{i=1}^{f_d} \delta_i\leq \Upsilon \equiv \displaystyle\argmin_{f_d\in \mathbb{Z}^+} \sum_{i=f_d+1}^{|\Delta_D|} \delta_i\leq \Upsilon.}
      \vspace{1.5mm}
  \end{equation}
  In contrast, in the third case, the capability of the network operator in performing intentional failures $\Gamma$ identifies the minimum required amount of resource readjustments. Mathematically, in this case, at any stage of the cascading failure, if  the following constraint is met, then the cascading failure can be absorbed by performing resource re-adjustments: 
  \begin{equation}\label{eq:con2}
      \sum_{i=1}^{S} {C_i}\geq \sum_{i=\Gamma+1}^{|\Delta_D|} \delta_i.
  \end{equation}
  In the fourth case, at any stage of cascading failures, the possibility of mitigation should be checked by assuming failure of the maximum number of demand nodes with resource deficiency using~\eqref{eq:con2}. If~\eqref{eq:con2} holds, then the minimum number of required intentional failures can be found based on~\eqref{eq:con1}; otherwise, the algorithm waits for the next time instant for deployment of the adaptability schemes. Note that our proposed algorithms enjoy low computational complexities. In particular, the computational complexity of performing $\Gamma$ intentional failures in Algorithm~\ref{alg:3} is $O(\Gamma)$, while the computational complexity of performing $\Upsilon$ resource re-adjustments in Algorithm~\ref{alg:4} is $O(\frac{\sum_{i=1}^{|\Delta_D|}\delta_i}{\nu}+\frac{\Upsilon-\sum_{i=1}^{|\Delta_D|}\delta_i}{\epsilon})$, where $\nu$ and $\epsilon$ are tuning parameters used in Algorithm~\ref{alg:4}.
  The aforementioned constraints (\eqref{eq:con1},~\eqref{eq:con2}) also provide interesting insights for an attacker, especially in scenarios such as battlefields, who aims to attack some supply nodes or impose targeted resource deficiencies on some demand node by restraining the resource provisioning to them. In particular, they can identify the required amount of resource deficiencies on the demand layer to trigger cascading failures.
  \\
  
    \textbf{B) Networks with a stubborn/fixed topology:} Our proposed resource re-adjustment method in Algorithm~\ref{alg:4} assumes the possibility of adding a link between a pair of nodes, which were previously sharing no resource, i.e., changing the weight of an edge with the zero initial weight. For a network with a fixed topology, or equivalently when the construction of a link takes forbidding efforts, this assumption may not be valid. In this case, the resource re-adjustments are limited to modifying the resources among the initially connected nodes, i.e., changing the weights of those edges with non-zero weights. For this scenario, we first provide a necessary condition for mitigating cascading failures upon having a limited capability of performing intentional failures and then provide the corresponding resource re-adjustment scenario.
    
Since the network corresponds to a bipartite graph, the adjacency matrix of the network $A$ can be written as follows:
\begin{equation}
    A= \begin{bmatrix}
   0& B\\
   B^T&0
   \end{bmatrix}.
\end{equation}
The matrix A has $D+S$ rows and columns, where row $1$ to $S$ corresponds to supply node $s_1$ to $s_S$ and rows $S+1$ to $S+D$ correspond to demand node $d_1$ to $d_D$; likewise for the columns. The rectangular matrix $B=[b_{ij}]_{1\leq i\leq S,\; 1\leq j\leq D}$ identifies the existence of a connection between a pair nodes, where $b_{ij}=1$ if $\rho(s_i,d_j)>0$, and zero otherwise. In this case, upon existence of any resource deficiencies on the demand layer, let $\tilde{r}_{ij}$ denote the extra resources needed to be allocated from node $s_i$ to demand node $d_j$ to satisfy the stability condition of that demand node. 
We group the supply nodes as $\mathcal{N}(d_1), \mathcal{N}(d_2), \cdots, \mathcal{N}(d_D)$, where $\mathcal{N}(d_j)$ denotes a group consisting of those supply nodes providing resources to demand node $d_j$. A supply node may belong to multiple groups upon providing to multiple demand nodes. Consider the set $\Delta_D$ with its elements as defined above and let $d_{k_1}, d_{k_2},\cdots, d_{k_{|\Delta_D|}}$ denote those demand nodes with resource deficiencies and $P\triangleq \mathcal{N}(d_{k_1})\cup \mathcal{N}(d_{k_2})\cdots \cup \mathcal{N}(d_{k_{|\Delta_D|}})$. It can be verified that the necessary condition for mitigating an ongoing cascading failure is then given by $\sum_{s_i\in P} {C_i}\geq \sum_{i=\Gamma+1}^{|\Delta_D|} \delta_i$.

In this scenario, a suitable resource re-adjustment $\tilde{R}=[\tilde{r}_{ij}]_{1\leq i\leq S,\; 1\leq j\leq D}$ should satisfy the following set of equations: 
\begin{equation}\label{eq:numerical}
\begin{aligned}
   &b_{11} \tilde{r}_{1,1}+b_{21} \tilde{r}_{2,1}+b_{31} \tilde{r}_{3,1}+\cdots +b_{S1} \tilde{r}_{S,1}=\delta_1,\\
    &  b_{12} \tilde{r}_{1,2}+b_{22} \tilde{r}_{2,2}+b_{32} \tilde{r}_{3,2}+\cdots +b_{S2} \tilde{r}_{S,2}=\delta_2,\\
   &\vdots\\
   & b_{1D} \tilde{r}_{1,D}+b_{2D} \tilde{r}_{2,D}+b_{3D} \tilde{r}_{3,D}+\cdots +b_{SD} \tilde{r}_{S,D}=\delta_D.
 \end{aligned}
\end{equation}
The left hand side of this system of equations can be written as the Hadamard product of matrices $B$ and $\tilde{R}$. In Algebra, there is no known solution for the above equation holding for any matrix $B$. Hence, the set of solutions of~\eqref{eq:numerical} must be found numerically. In that case, among the set of solutions, any solution $\tilde{{R}}^*=[\tilde{r}^*_{ij}]_{1\leq i\leq S,\; 1\leq j\leq D}$ satisfying $ \tilde{r}^*_{i,1}+ \tilde{r}^*_{i,2}+ \tilde{r}^*_{i,3}+\cdots+ \tilde{r}^*_{i,D}\leq R_i$, $\forall s_i \in \mathcal{S}$, satisfies the stability conditions on the demand and supply nodes, and thus terminates the propagation of cascading failures. It is obvious that upon applying intentional failures, some elements on the right hand side of~\eqref{eq:numerical} become zero.

	
	 \begin{figure}[t]
	\minipage{8.7cm}
		\includegraphics[width=1\linewidth, height=0.649\linewidth]{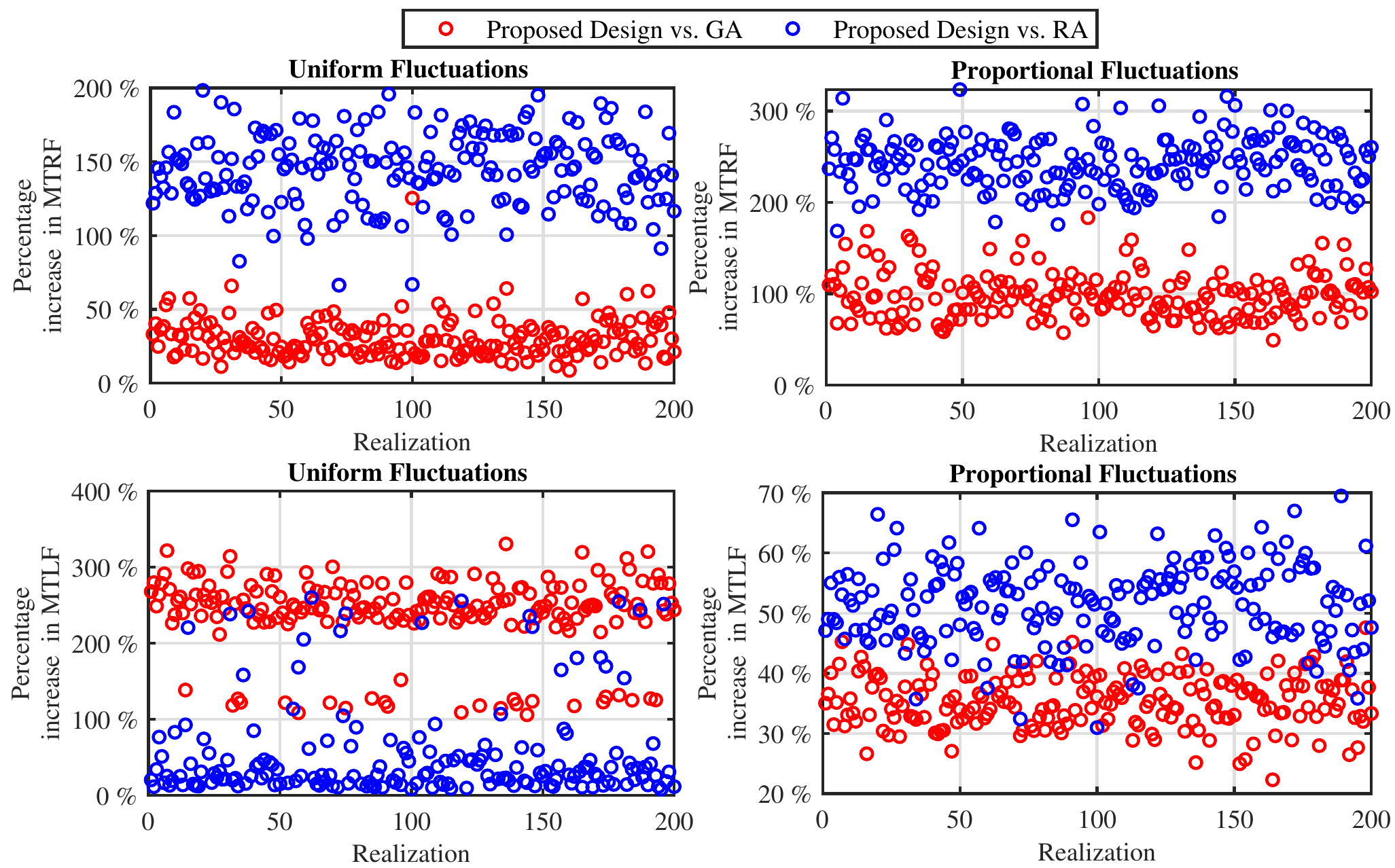}
		\caption{Percentage increase in $MTRF$ and $MTLF$ using our proposed design methods as compared to the baselines under uniform fluctuations (the subplots on the left) and proportional fluctuations (the subplots on the right). \label{fig:11}}
		\endminipage
		\vspace{4mm}
	\minipage{8.7cm}
	\includegraphics[width=1\linewidth, height=0.649\linewidth]{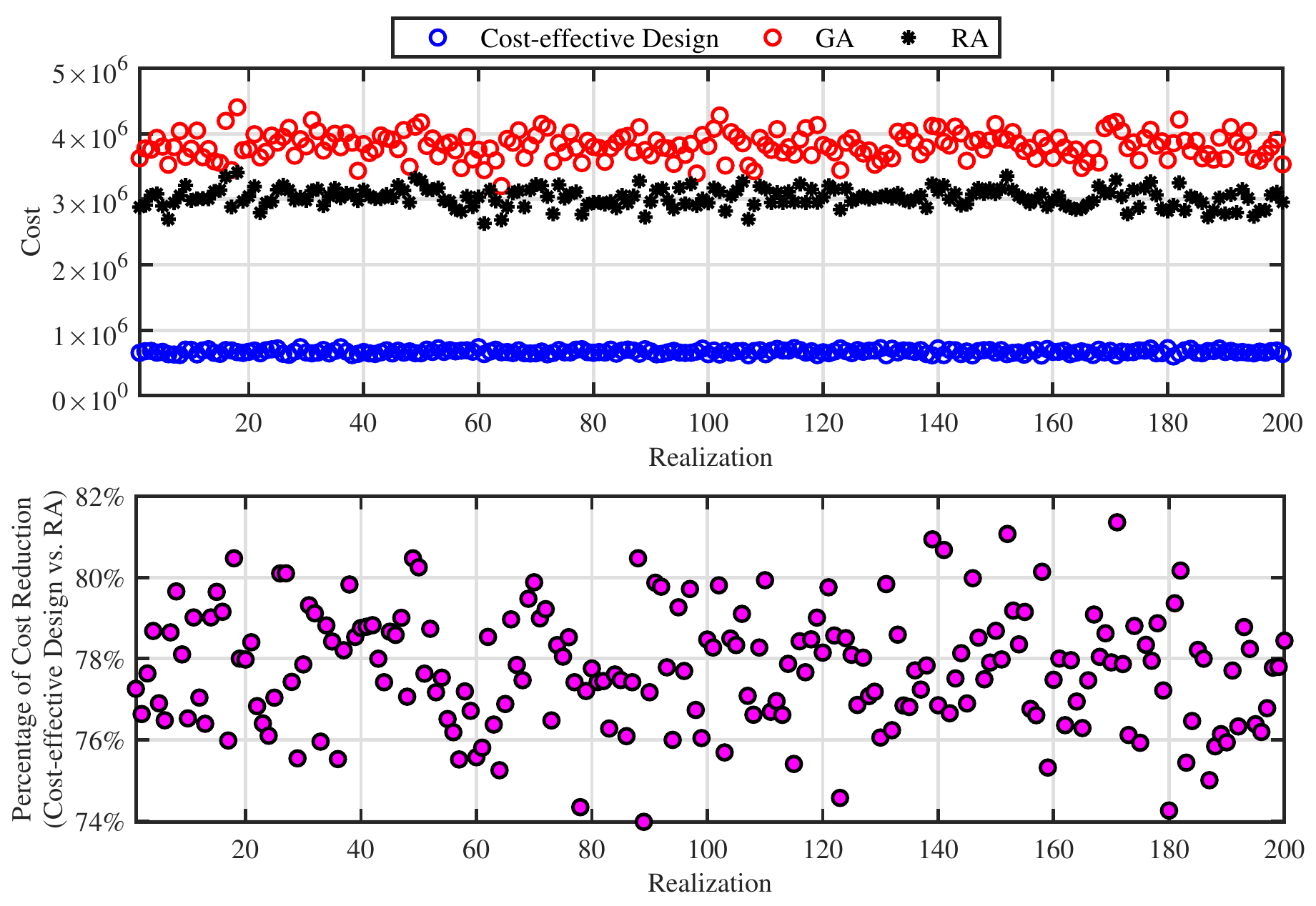}
		\caption{Comparison between allocation cost of different resource allocation methods (top). Percentage of decrease in cost upon using our cost-effective design as compared to RA (bottom).\label{fig:22}}
		\endminipage
			\vspace{5mm}
	\minipage{8.7cm}
		\includegraphics[width=1\linewidth, height=0.649\linewidth]{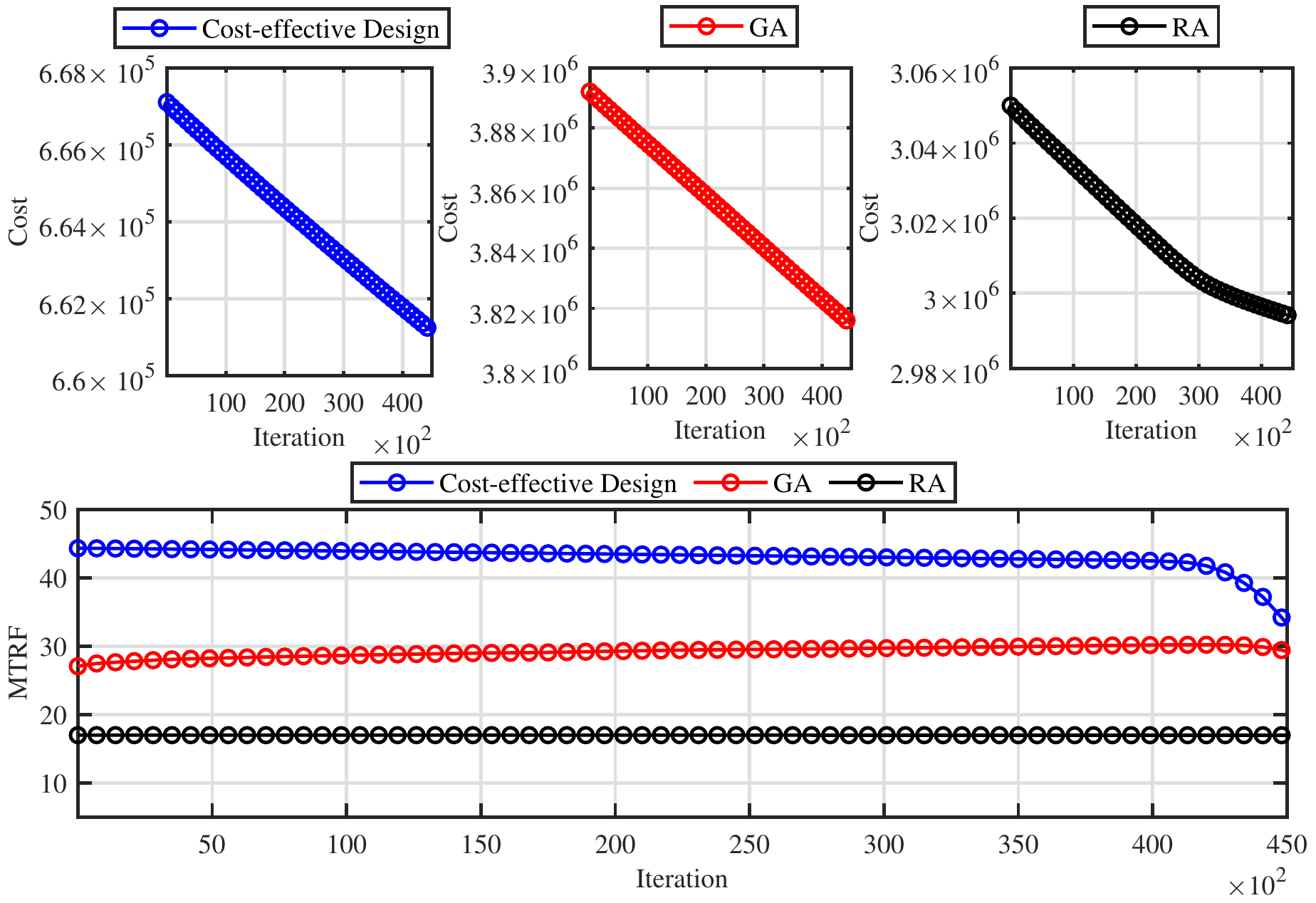}
		\caption{Cost of allocation versus iteration for Algorithm~\ref{alg:2} (top). Comparison between $MTRF$ of different allocation methods upon cost reduction under uniform resource fluctuations (bottom). \label{fig:33}}
		\endminipage
	\end{figure}

  \begin{figure}  
    	\includegraphics[width=0.99\linewidth, height=0.7\linewidth]{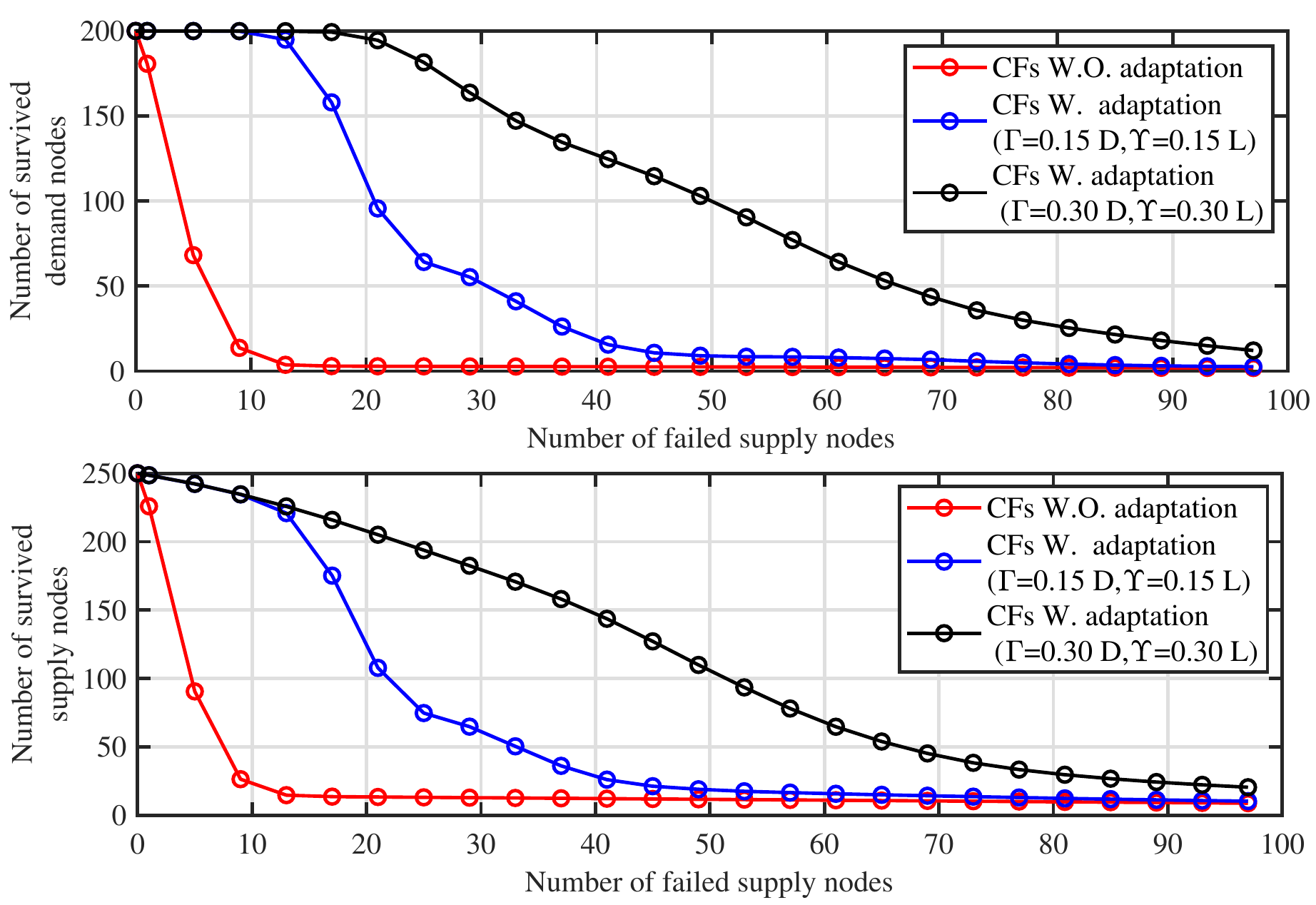}
		\caption{Comparison between the number of surviving supply/demand nodes under cascading failures with and without utilizing our network adaptation mechanism. In the legend, abbreviations CFs., W.O and W. denote cascading failures, without, and with, respectively, $D$ indicates the number of demand nodes and $L=\sum_{d_i\in \mathcal{D}} L_i$. \label{fig:44}}
    \end{figure}
    
      \begin{figure}  
    	\includegraphics[width=0.99\linewidth, height=0.7\linewidth]{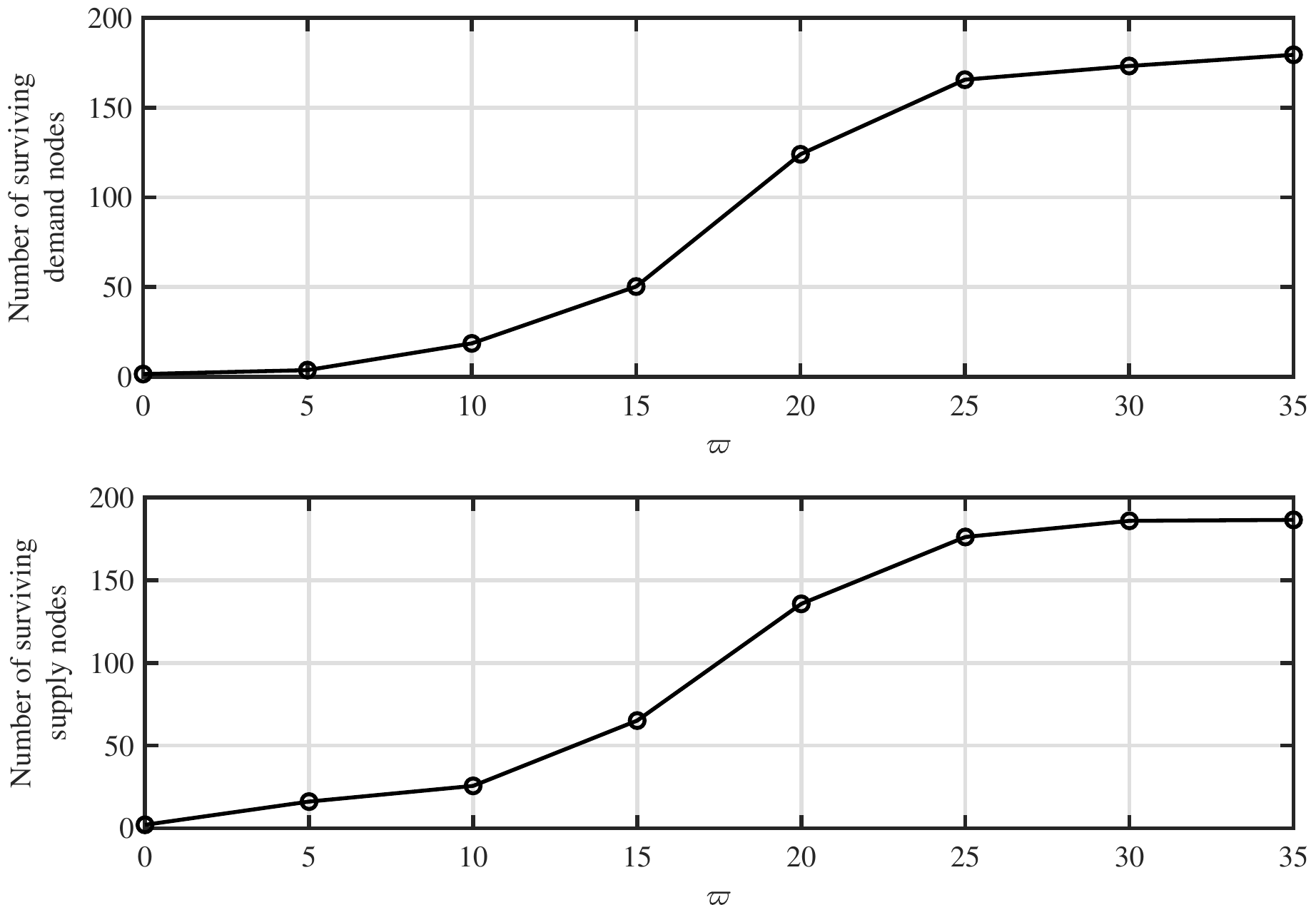}
		\caption{The number of surviving supply/demand nodes under cascading failures considering different network adaptation capabilities described by the parameter $\varpi$, where $\Gamma=\varpi D$, $\Upsilon= \varpi L$, and $L=\sum_{d_i\in \mathcal{D}} L_i$. \label{fig:45}}
		\vspace{1mm}
    \end{figure}
    
    	\section{Simulation Results}\label{sec:Sims}
\noindent We consider a demand-supply network consisting of $250$ supply nodes and $200$ demand nodes. Simulations are conducted for $200$ realizations of the network, in each of which the resources and loads of the respective nodes are generated uniformly at random in the intervals $[10,280]$ and $[10,200]$, respectively.

  We consider two allocation strategies, greedy allocation (GA) and random allocation (RA), as baselines. The GA is an iterative algorithm, which is fed with the resource pool and the requested loads of the nodes.
At each iteration, the GA chooses the supply node with the largest spare resource and fulfills the demands of the demand node with the largest unsatisfied load. In the RA, the adjacency matrix of the network is generated randomly fulfilling the stability conditions. In both methods, before realizing the resource configuration, we freeze/obstruct $10\%$ of the resources of each supply node to avoid having supply nodes at the threshold of failure. In the cost modeling, i.e.,~\eqref{eq:cost}, for each pair of nodes, parameter $\alpha$ is chosen uniformly at random between $10$ and $100$, while $\beta$ is chosen uniformly at random between $1.1$ and $1.4$ . In Algorithm~\ref{alg:2}, we set $\epsilon=0.01$. 

Fig.~\ref{fig:11} depicts the percentage increase in the values of $MTRF$ and $MTLF$ obtained by utilizing our proposed design methods as compared to the baselines under uniform and proportional resource/load fluctuations. From the two subplots on the top, it can be seen that on average our method exhibits around $87\%$ ($175\%$) performance gain in MTRF as compared to the baselines\footnote{These numbers are the average performance gains over the two baselines.} upon uniform (proportional) resource fluctuations. Also, from the two subplots on the bottom, $165\%$ ($42\%$) performance gain in MTLF as compared to the baselines upon uniform (proportional) load fluctuations can be observed. We have observed that our proposed methods lead to significant performance gains as compared to the baselines for other network settings. However, the amount of achieved gains may vary form one parameter setting to another for different metrics and load fluctuation mechanisms. As can be seen from Fig.~\ref{fig:11}, the greedy algorithm often outperforms the random allocation, which is manifested by a closer performance to our optimal designs, except for MTLF with the uniform load fluctuations (bottom left subplot). This may be explained as follows: in the greedy algorithm most of the demand nodes are supplied by a few number of supply nodes. Hence, the amount of load fluctuation on the demand nodes does not spread among multiple supply nodes, which directly endangers those supply nodes with a small free capacity. 

We depict the allocation cost of different resource configuration methods in Fig.~\ref{fig:22}. From the top plot of this figure, the clear superiority of our cost-effective design can be seen. Also, as can be seen, the RA results in a less allocation cost as compared to the GA. This is due to the resource congestion over the links connected to the supply nodes with large value of resources in the GA. To have a better comparison, the percent of decrease in cost upon using our cost-effective design as compared to the RA is shown in the bottom plot of this figure, which reveals a $77\%$ cost reduction (on average) upon using our method.

For the uniform resource fluctuations scenario, Fig.~\ref{fig:33} demonstrates the average performance of our cost-reduction method (Algorithm~\ref{alg:2}). In the top plots, the cost is shown versus iteration demonstrating the cost reduction upon applying our algorithm to different resource allocation methods. The corresponding $MTRF$ of the network is depicted in the bottom plot. As can be seen, our algorithm successfully reduces the allocation cost while maintaining a high robustness for the network. Also, as can be seen from the bottom plot, at the end of the last iteration although $MTRF$s for our cost-effective design and the baselines, especially GA, are close together, the cost of the cost-effective design is significantly lower.

The performance of confining cascading failures using our proposed network adaptability schemes is depicted in Fig.~\ref{fig:44}. In this simulation, we measure the number of surviving/stable nodes at the end of cascading failures triggered by failing various numbers of supply nodes. To have a fair performance measurement, we apply our algorithm to $200$ realizations of the network configured using RA. Each point in the figure represents the average performance over $20000$ iterations comprising $200$ network realizations with $100$ different choices for the initially failed supply nodes for each realization. The simulations are conducted for two choices of the parameters for $\Gamma$ and $\Upsilon$ (introduced in Section~\ref{sec:confine}): i) $\Gamma=0.15 D$, $\Upsilon= 0.15 \sum_{d_i\in \mathcal{D}} L_i$, and ii) $\Gamma=0.35 D$, $\Upsilon= 0.35 \sum_{d_i\in \mathcal{D}} L_i$, which corresponds to a higher capability of performing demand nodes isolation and load re-adjustments as compared to the first case. {\color{black} As can be seen, implementing our network adaptation scheme can have a remarkable impact on the size of the surviving network, and thus the survivability of the network upon occurrence of cascading failures.}
{\color{black} Also, the impact of different network adaptation capabilities on the final size of the network upon occurrence of cascading failures is depicted in Fig.~\ref{fig:45}. In this simulation, initially, $30$ supply nodes are randomly failed and the results are obtained through averaging over $100$ Monte-Carlo iterations. It can be seen that as the capability of adapting the network increases, our proposed algorithm can save a significant number of nodes from failures. 
}
\section{Conclusion and Future Work}\label{sec:conc}
\noindent In this work, we studied the robustness of demand-supply networks by considering the resource as a quantitative value and incorporating the inherent resource sharing mechanism into our model. We studied the effect of different stress mechanisms on the network and investigated a suitable cascading failure mechanism. After quantifying the robustness considering different load/resource fluctuation scenarios, we proposed effective methods achieving the highest robustness to prevent cascading failures from happening. We further proposed a method that achieves the highest robustness under heterogeneous resource allocation costs among the nodes. We introduced an effective algorithm to reduce the resource allocation cost while maintaining a high robustness. Moreover, we extended the concept of network adaptability to our generic model. Along this direction, we proposed new network adaptability methods, using which we developed an algorithm to confine cascading failures. For the future work, one interesting direction is to study  the demand-supply networks with uncertainties in available resources and requested loads of the nodes. This requires utilizing probabilistic methods to quantify the robustness and solve the optimization problems.   

  \section*{Acknowledgments}
This work was supported in part by the National Science Foundation under grants ECCS-1444009 and CNS-1824518, and in part by Army Research Office under Grant W911NF-17-1-0087. 

   \bibliographystyle{IEEEtran}
\bibliography{IEEEabrv}

\begin{IEEEbiography}[{\includegraphics[width=1.0in,height=1.15in,clip]{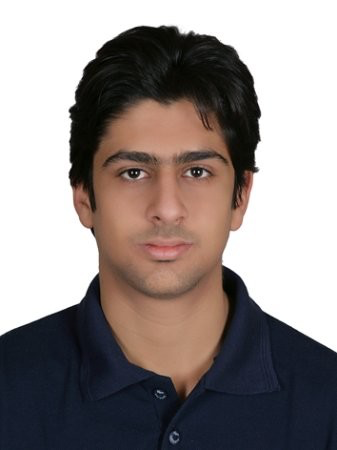}}]{Seyyedali Hosseinalipour} (S'17)  received the B.S. degree in Electrical Engineering from Amirkabir University
of Technology (Tehran Polytechnic), Tehran, Iran in 2015. He is currently pursuing a Ph.D. degree in
the Department of Electrical and Computer Engineering at North Carolina State University, Raleigh, NC, USA. His research interests include analysis of wireless networks, resource allocation and load balancing for cloud computing, and studying the robustness of interdependent networks.
\end{IEEEbiography}
\vspace{-11mm}
\begin{IEEEbiography}[{\vspace{-5mm}\includegraphics[width=1.02in,height=1.05in,clip]{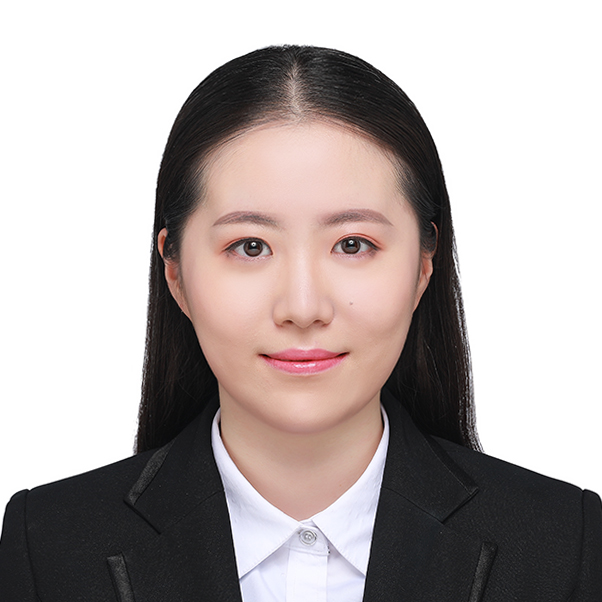}}]{Jiayu Mao} received the B.S. degree in Electronic Science \& Technology from Beijing Institute of Technology, China, in 2019. She is currently pursuing a PhD degree in the Department of Electrical Engineering, Penn State University. Her current research interests include machine learning, edge learning and wireless communication.
\end{IEEEbiography}
\vspace{-12mm}
\begin{IEEEbiography}[{\includegraphics[width=1.12in,height=1.25in,clip]{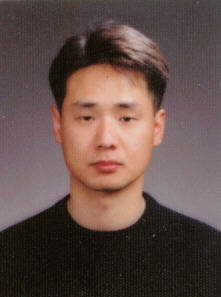}}]{Do Young Eun}   received his B.S. and M.S. degree in Electrical Engineering from Korea Advanced Institute of Science and Technology (KAIST), Taejon, Korea, in 1995 and 1997, respectively, and Ph.D. degree from Purdue University, West Lafayette, IN, in 2003. Since August 2003, he has been with the Department of Electrical and Computer Engineering at North Carolina State University, Raleigh, NC, where he is now a professor. His research interests include network modeling and performance analysis, mobile ad-hoc/sensor networks, mobility modeling, and randomized algorithms for large (social) networks. He has been a member of Technical Program Committee of various conferences including IEEE INFOCOM, ICC, Globecom, ACM MobiHoc, and ACM Sigmetrics. He is currently on the editorial board of IEEE/ACM Transactions on Networking and Computer Communications Journal, and was TPC co-chair of WASA'11. He received the Best Paper Awards in the IEEE ICCCN 2005, IEEE IPCCC 2006, and IEEE NetSciCom 2015, and the National Science Foundation CAREER Award 2006. He supervised and co-authored a paper that received the Best Student Paper Award in ACM MobiCom 2007.  
\end{IEEEbiography}
\vspace{-10mm}
\begin{IEEEbiography}[{\includegraphics[width=1.15in,height=1.15in,clip]{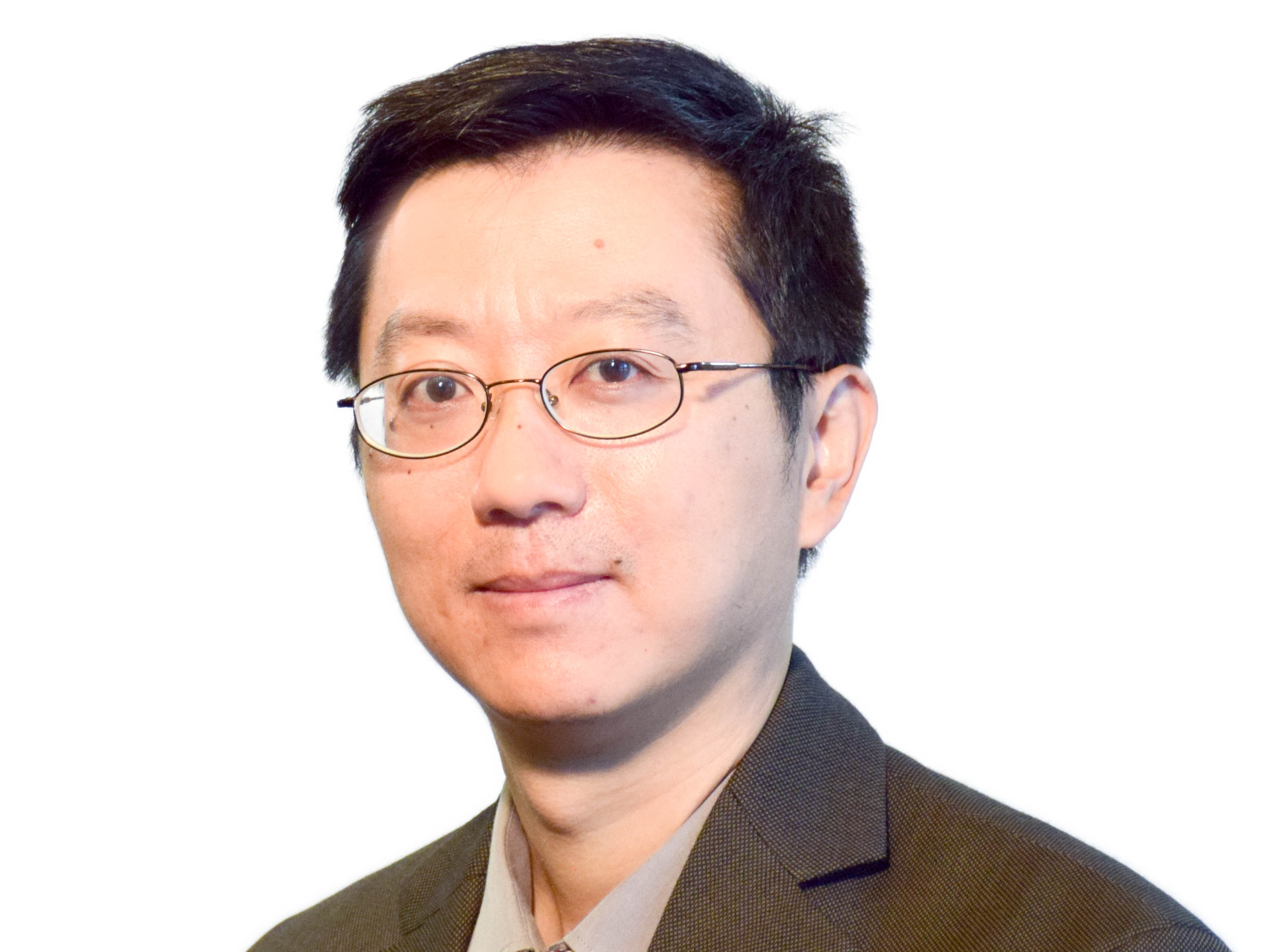}}]{Huaiyu Dai} (F’17)
  received the B.E. and M.S. degrees in electrical engineering from Tsinghua
University, Beijing, China, in 1996 and 1998, respectively, and the Ph.D. degree in electrical
engineering from Princeton University, Princeton, NJ in 2002. He was with Bell Labs, Lucent Technologies, Holmdel, NJ, in summer 2000, and with AT\&T Labs-Research, Middletown, NJ, in summer 2001. He is currently a Professor of Electrical and Computer
Engineering with NC State University, Raleigh, holding the title of University Faculty Scholar. His research interests are in the general areas of
communication systems and networks, advanced signal processing for digital communications, and
communication theory and information theory. His current research focuses on networked information
processing and crosslayer design in wireless networks, cognitive radio networks, network security, and
associated information-theoretic and computation-theoretic analysis.
He has served as an editor of IEEE Transactions on Communications, IEEE Transactions on Signal
Processing, and IEEE Transactions on Wireless Communications. Currently he is an Area Editor in
charge of wireless communications for IEEE Transactions on Communications. He co-edited two
special issues of EURASIP journals on distributed signal processing techniques for wireless sensor
networks, and on multiuser information theory and related applications, respectively. He co-chaired the
Signal Processing for Communications Symposium of IEEE Globecom 2013, the Communications
Theory Symposium of IEEE ICC 2014, and the Wireless Communications Symposium of IEEE
Globecom 2014. He was a co-recipient of best paper awards at 2010 IEEE International Conference on
Mobile Ad-hoc and Sensor Systems (MASS 2010), 2016 IEEE INFOCOM BIGSECURITY
Workshop, and 2017 IEEE International Conference on Communications (ICC 2017).
\end{IEEEbiography}
\end{document}